\documentclass[12pt]{amsart}
\usepackage{amsmath,amssymb,amsthm,latexsym,graphicx,cite,color}
\headheight=8pt       \topmargin=-1pt
\textheight=611pt     \textwidth=476pt
\oddsidemargin=-4pt   \evensidemargin=-4pt
\newtheorem{theorem}{Theorem}
\newtheorem{lemma}[theorem]{Lemma}
\newtheorem{proposition}[theorem]{Proposition}
\newtheorem{remark}[theorem]{Remark}
\newtheorem{corollary}[theorem]{Corollary}
\newtheorem{conjecture}[theorem]{Conjecture}
\newtheorem{definition}[theorem]{Definition}
\newtheorem{example}[theorem]{Example}
\def\C{\mathbb{C}}
\def\cV{\mathcal{V}}

\def\G{\Gamma}
\def\g{\gamma}
\def\N{\mathbb{N}}
\def\Q{\mathbb{Q}}
\def\R{\mathbb{R}}
\def\T{\mathbb{T}}
\def\Z{\mathbb{Z}}
\newcommand{\DSG}{{\it DSG}}


\begin{document}

\title[Generic dispersion relations]{Generic properties of dispersion relations for discrete periodic operators}
\author{Ngoc Do}
\address{Department of Mathematics, Missouri State University, Springfield, 65897 Missouri, USA}
\email{ngocdo@missouristate.edu}
\author{Peter Kuchment}
\address{Department of Mathematics, Texas A\&M University, College Station, 77843-3368 Texas, USA}
\email{kuchment@math.tamu.edu}
\author{Frank Sottile}
\address{Department of Mathematics, Texas A\&M University, College Station, 77843-3368 Texas, USA}
\email{sottile@math.tamu.edu}
\thanks{The authors were supported by the NSF.  The first two by grant DMS-1517938 and the third  by DMS-1501370.\\
2010 MSC classification:81Q10,81Q35,35P,35Q40,14J81,14Q15}

\begin{abstract}
An old problem in mathematical physics deals with the structure of the dispersion relation of the Schr\"odinger operator $-\Delta+V(x)$ in $\R^n$ with periodic potential near the edges of the spectrum, i.e.\ near extrema of the dispersion relation. A well known and widely believed conjecture says that generically (with respect to perturbations of the periodic potential) the extrema are attained by a single branch of the dispersion relation, are isolated, and have non-degenerate Hessian (i.e., dispersion relations are graphs of Morse functions). The important notion of effective masses in solid state physics, as well as Liouville property, Green's function asymptotics, etc. hinge upon this property.

The progress in proving this conjecture has been slow. It is natural to try to look at discrete problems, where the dispersion relation is (in appropriate coordinates) an algebraic, rather than analytic, variety. Moreover, such models are often used for computation in solid state physics (the tight binding model). Alas, counterexamples exist even for Schr\"odinger operators on simple 2D-periodic two-atomic structures. showing that the genericity fails in some discrete situations.

We start establishing in a very general situation the following natural dichotomy: the non-degeneracy of extrema either fails or holds in the complement of a proper algebraic subset of the parameters. Thus, a random choice of a point in the parameter space gives the correct answer ``with probability one.''

Noticing that the known counterexample has only two free parameters, one can suspect that this might be too tight for genericity to hold. We thus consider the maximal $\Z^2$-periodic two-atomic nearest-cell interaction graph, which has nine edges per unit cell and the discrete ``Laplace-Beltrami'' operator on it, which has nine free parameters. We then use methods from computational and combinatorial algebraic geometry to prove the genericity conjecture for this graph. Since the proof is non-trivial and would be much harder for more general structures, we show three different approaches to the genericity, which might be suitable in various situations.

It is also proven in this case that adding more parameters indeed cannot destroy the genericity result. This allows us to list all ``bad'' periodic subgraphs of the one we consider and discover that in all these cases genericity fails for ``trivial'' reasons only.
\end{abstract}
\maketitle

\section{Introduction}
Consider a $\Z^n$-periodic self-adjoint elliptic operator $L$ in $\R^n$. The issue discussed below can be formulated and studied in a more general setting, but the reader can think of the Schr\"odinger operator $L=-\Delta+V(x)$ with a real $\Z^n$-periodic potential $V(x)$, which we assume to be sufficiently ``nice\footnote{Assuming the potential even being $C^\infty$ does not seem to make the problem we discuss any easier.},'' e.g., $L_\infty$.

\subsection{Dispersion relation and spectrum}
We recall some notions from the spectral theory of periodic operators and solid state physics (see, e.g. \cite{AM,KuchBAMS,KuchBook,RS,Skr}).

For $k\in\R^n$ (called \textbf{quasimomentum} in physics) let us define the \textbf{twisted Schr\"odinger operator} $L(k)$ to
be $L$ applied to functions $u(x)$ on $\R^n$ that are \textbf{$k$-automorphic} (also called \textbf{Floquet}, sometimes \textbf{Bloch functions}, with quasimomentum $k$), i.e.
\begin{equation}\label{E:cyclic_k}
u(x+\gamma)=e^{ik\cdot\gamma}u(x) \mbox{ for any }\gamma\in\Z^n,
\end{equation}
where $k\cdot\gamma=\sum_j k_j\gamma_j$. In other words, $u(x)=e^{ik\cdot x}p(x)$, where the function $p(x)$ is $\Z^n$-periodic.

Then $L(k)$ is an elliptic operator in a line bundle over the torus\footnote{In another \textbf{Floquet multiplier} $z$ incarnation, see the Definition \ref{D:mult} below, $L(z)$ acts in the trivial bundle over the torus, but the operator depends analytically on $z$}.

The quasimomentum $k$ is well defined up to shifts by vectors from the lattice $2\pi\Z^n$ (the \textbf{dual lattice} $G^*$ to $G:=\Z^n$, i.e. consisting of all vectors $k$ such that $k\cdot\gamma\in 2\pi\Z$ for any $\gamma\in G$). Thus, it is sufficient to restrict $k$ to the \textbf{Brillouin zone} $B=[-\pi,\pi)^n$.

It is often convenient to factor out the $G^*$-periodicity and consider instead of vectors $k\in\C^n$ the complex vectors $Z$
with non-zero components
\begin{equation}\label{E:k_z}
z:=e^{ik}:=(e^{ik_1},\dots,e^{ik_n})\in\left(\C\backslash\{0\}\right)^n.
\end{equation}

\begin{definition}\label{D:mult}
Vectors $z$ in~\eqref{E:k_z} are called {\bf Floquet multipliers} (the name comes from the Floquet theory for ODEs \cite{MagnusWinkl_hills,Iakubovich_periodic,CoddLevinson}, where $n=1$).
\end{definition}

When the quasimomentum $k$ is real, the corresponding Floquet multiplier belongs to the \textbf{unit torus}
\begin{equation}
\T^n:=\{z\in\C^n\,|\,|z_j|=1,j=1,...,n\}\subset\C^n.
\end{equation}
In terms of Floquet multipliers $z$, the Floquet functions satisfy
\begin{equation}\label{E:cyclic_z}
u(x+\gamma)=z^\gamma u(x) \mbox{ for any }\gamma\in\Z^n,
\end{equation}
where $z^\gamma=e^{ik\cdot\gamma}=e^{i\sum_j k_j\gamma_j}$. In other words, $u(x)=z^x p(x)$, where the function $p(x)$ is $\Z^n$-periodic.

%
%
%
%

\subsection{Floquet decomposition}
The above discussion  suggests the use of Fourier series.
A standard argument justifies the decomposition of the (unbounded self-adjoint) operator $L$ in $L_2(\R^n)$ into the \textbf{direct integral} (see \cite{KuchBAMS,KuchBook,RS,Skr})
\begin{equation}\label{E:decomp}
L=\int\limits^{\bigoplus}_B L(k) dk .
\end{equation}
As $L(k)$ is an elliptic operator in sections of a line bundle over the torus $\R^n/\Z^n$, it has discrete spectrum
$\sigma(k):=\sigma(L(k))$ that consists of infinitely many  eigenvalues, each of finite multiplicity,
$$
\lambda_1(k)< \lambda_2(k)\leq \lambda_3(k)\leq \dots \to \infty\,.
$$
See \cite{KuchBook,KuchBAMS,RS,Skr} for more details.

%
%
\begin{definition}\label{D:dispers}
  The (real) \textbf{dispersion relation} (or  \textbf{Bloch variety}) $B_L$ of the periodic operator $L$ is the
  subset of\/ $\R^n_k\times \R_\lambda$, where $(k,\lambda) \in B_L$ if and only if
\begin{equation}\label{E:dispers}
   Lu = \lambda u \mbox{ has a non-zero  solution }u(x)=e^{ik\cdot x}p(x),
 \end{equation}
 where $p(x)$ is periodic with the same group of periods as the operator.

 The $j$th eigenvalue function $\lambda_j(k)$ is the \textbf{$j$th band function}.
\end{definition}

Thus, the dispersion relation is the graph of the multiple-valued function $k\mapsto \sigma(L(k))$.
\begin{figure}[ht!]
  \centering
  \includegraphics[scale=0.5]{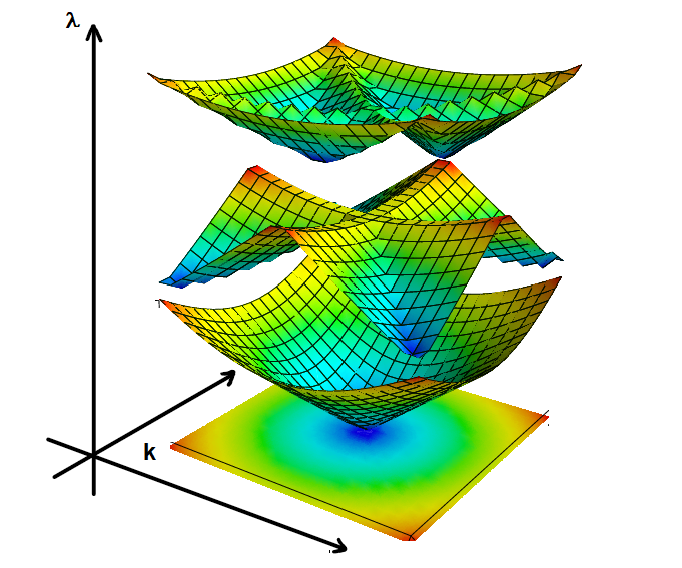}
  \caption{A dispersion relation. Three lower branches (band functions) are shown.}\label{F:dispers}
\end{figure}

\begin{remark}\label{R:complex}
  By allowing both the quasimomentum $k$ and the spectral parameter $\lambda$ to be complex, one defines the
  \textbf{complex Bloch variety $B_{L,\C}$}. In the complex domain, though, numbering the eigenvalues in their order becomes impossible. In many cases, they become branches of the same irreducible analytic function.
\end{remark}

The following properties of the dispersion relation (Bloch variety) are well known:

\begin{proposition}[\cite{KuchBAMS,KuchBook}]\label{P:analyt}
\indent
\begin{enumerate}
\item The complex Bloch variety is an analytic subvariety of $\C^n_k\times\C_\lambda$. Namely, it is the set of all zeros of an entire function $f(k,\lambda)$ of a finite exponential order\footnote{In some instances, the exponential estimate becomes important, see Section \ref{S:remarks}.} on $\C^n_k\times\C_\lambda$.
\item The projection of the real Bloch variety onto the real $\lambda$-axis is the spectrum $\sigma(L)$ of the operator $L$ in $\R^n$.
\item The projection of the graph of the $j$th band function into the real $\lambda$-axis is a finite closed interval called the \textbf{$j$th spectral band $I_j$}. The spectral bands might overlap, or leave open spaces in between called \textbf{spectral gaps}, see Fig.~\ref{F:spectrum}.
\end{enumerate}
\end{proposition}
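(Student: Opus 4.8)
The plan is to treat the three assertions separately, with essentially all of the analytic content sitting in part (1); parts (2) and (3) are soft consequences of the direct integral decomposition~\eqref{E:decomp}. For (1) I would first conjugate away the $k$-automorphy: writing $u(x)=e^{ik\cdot x}p(x)$ with $p$ periodic turns the eigenvalue problem~\eqref{E:dispers} into $\tilde L(k)p=\lambda p$ on the torus $\R^n/\Z^n$, where $\tilde L(k)=e^{-ik\cdot x}L e^{ik\cdot x}$. For the Schr\"odinger operator this is $\tilde L(k)=-(\nabla+ik)^2+V$, a second-order elliptic operator on the compact manifold $\R^n/\Z^n$ depending polynomially, in particular entirely, on $k\in\C^n$. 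Thus the complex Bloch variety $B_{L,\C}$ is exactly the set of $(k,\lambda)$ for which $\tilde L(k)-\lambda$ fails to be invertible on $L_2(\R^n/\Z^n)$.

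To exhibit this set as the zero locus of an entire function I would factor out a fixed invertible reference operator. Choosing $c$ so large that $\tilde L(0)+c$ is boundedly invertible with compact resolvent, write
\[
\tilde L(k)-\lambda=(\tilde L(0)+c)\bigl(I+T(k,\lambda)\bigr),\qquad
T(k,\lambda):=(\tilde L(0)+c)^{-1}\bigl(\tilde L(k)-\tilde L(0)-\lambda-c\bigr).
\]
Since $\tilde L(k)-\tilde L(0)$ is of first order (it equals $-2ik\cdot\nabla+k\cdot k$) while $(\tilde L(0)+c)^{-1}$ smooths by two derivatives, $T(k,\lambda)$ is a compact, in fact Schatten-class $\mathcal{S}_p$ (any $p>n$), operator depending analytically on $(k,\lambda)$. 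Invertibility of $\tilde L(k)-\lambda$ is then equivalent to invertibility of $I+T(k,\lambda)$, so $B_{L,\C}$ is the zero set of the regularized Fredholm determinant $f(k,\lambda):={\det}_p\bigl(I+T(k,\lambda)\bigr)$, which by the standard theory of holomorphic operator families is entire on $\C^n\times\C$. This establishes the analyticity in (1).

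The finite exponential order in (1) is the one genuinely quantitative point, and it is the step I expect to be the main obstacle. I would bound $|f(k,\lambda)|$ through the Schatten-norm growth of $T(k,\lambda)$: the singular values of $(\tilde L(0)+c)^{-1}$ decay at the Weyl rate $j^{-2/n}$, while the multiplier part grows only polynomially in $(k,\lambda)$, so the determinant estimate $|{\det}_p(I+T)|\le\exp\bigl(C\|T\|_{\mathcal{S}_p}^p\bigr)$ yields a bound of the form $\exp\bigl(C(1+|k|+|\lambda|)^N\bigr)$, hence finite order. The delicate part is obtaining the \emph{sharp} exponent rather than merely some finite one; this requires the precise Weyl eigenvalue asymptotics together with uniform resolvent control in the two separate large-parameter regimes $|k|\to\infty$ and $|\lambda|\to\infty$, and is exactly where the exponential estimate alluded to in the footnote becomes important.

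Finally, parts (2) and (3) follow from self-adjoint perturbation theory applied to the real family $\tilde L(k)$, $k\in B$. Each band function $\lambda_j$ is continuous on the compact connected Brillouin zone $B$, so its image $I_j:=\lambda_j(B)$ is a compact interval, giving (3). For (2), the spectrum of a direct integral of self-adjoint fibers is the closure of the union of the fiber spectra,
\[
\sigma(L)=\overline{\bigcup_{k\in B}\sigma(L(k))}=\overline{\bigcup_{j}I_j},
\]
and since the $I_j$ are precisely the bands and the projection of the real Bloch variety to the $\lambda$-axis is $\bigcup_j I_j$, this identifies that projection with $\sigma(L)$, completing the argument.
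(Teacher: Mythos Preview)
The paper does not give its own proof of this proposition; it is quoted as a known result with references to \cite{KuchBAMS,KuchBook}, so there is nothing in the paper to compare against line by line. That said, your argument is essentially the standard one found in those references: conjugate to an analytic family $\tilde L(k)$ acting on a fixed space, reduce to a compact (Schatten-class) perturbation of the identity, and take a regularized Fredholm determinant.

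Two small remarks. First, you overstate the difficulty of the growth estimate: the proposition only asserts \emph{some} finite exponential order, not a sharp one, and your bound $|{\det}_p(I+T)|\le\exp(C\|T\|_{\mathcal S_p}^p)$ together with the polynomial dependence of $T(k,\lambda)$ on $(k,\lambda)$ already gives that. No ``sharp exponent'' is required here. Second, in part (2) you correctly write $\sigma(L)=\overline{\bigcup_k\sigma(L(k))}$, but then identify this with the projection $\bigcup_j I_j$ of the real Bloch variety without saying why no closure is needed. The missing observation is that $\inf_k\lambda_j(k)\to\infty$ as $j\to\infty$, so any bounded interval meets only finitely many of the closed intervals $I_j$, and hence $\bigcup_j I_j$ is already closed. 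With these minor additions your sketch is a complete proof along the lines of the cited sources.
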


%
%
\begin{figure}[ht!]
  \centering
  \begin{picture}(282,82)
      \put(-1,-0.5){\includegraphics{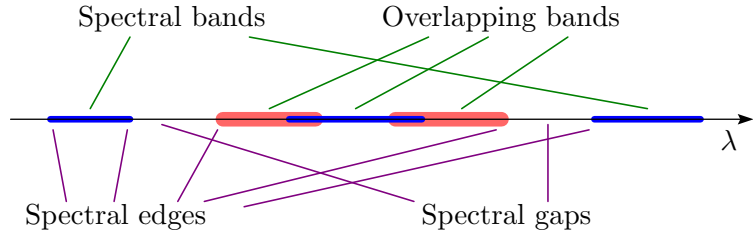}}
    \put( 5,0){\small Spectral edges}
    \put(155,0){\small Spectral gaps}

    \put( 25,75){\small Spectral bands}
    \put(140,75){\small Overlapping bands}

    \put(268,28){\small$\lambda$}
  \end{picture}
  \caption{Spectral bands and gaps.}\label{F:spectrum}
\end{figure}

\subsection{``Extended'' dispersion relations}

Besides varying the quasimomentum $k$, one might vary other parameters of the operator, e.g. of the periodic potential $V\in L_\infty(W)$, or introduce a periodic diffusion coefficient (metric) in the operator: $-\nabla\cdot D(x)\nabla u+ Vu$ (assuming that ellipticity is preserved). One can thus consider an extended dispersion relation in an appropriate Banach space
of quadruples $(D,V,k,\lambda)$. As long as ellipticity is preserved, an analog of Proposition \ref{P:analyt} still holds \cite{KuchBAMS}.

\subsection{The spectral edge conjecture}

\textbf{Spectral edges} are the endpoints of spectral gaps (see Fig.~\ref{F:spectrum}). By Proposition \ref{P:analyt}, they correspond to some of the extremal values of band functions. We are interested in the generic (with respect to perturbation of the periodic potential or other parameters of the operator) structure of these extrema. The genericity can be understood in a variety of ways, e.g. holding for a second Baire category set of potentials in an appropriate (Banach) space of potentials (the most likely situation), or stronger one - for a dense open subset, or even stronger - in the exterior of an analytic (or even algebraic) subset in the space.

An old conjecture, more or less explicitly formulated in a variety of sources, e.g. in \cite[Conjecture 5.25]{KuchBAMS}
or in~\cite{KuchBook,Nov,Nov2,Colin}, deals with the structure of the dispersion relation of the Schr\"odinger operator in
$\R^n$ with periodic potential near the edges of the spectrum, i.e. near (some of) the extrema of the dispersion relation. This well known
and widely believed conjecture says that generically the band functions are Morse functions. We make this precise below:

\begin{conjecture}\label{Con:gener}
Generically (with respect to the potentials and other free parameters
of the operator, e.g.\ metric in the Laplace-Beltrami operator), the extrema of band functions satisfy the following
conditions:
\begin{enumerate}
\item Each extremal value is attained by a single band $\lambda_j(k)$.
\item The loci of extrema are isolated.
\item The extrema are non-degenerate, i.e. at them the corresponding band functions have non-degenerate Hessians.
\end{enumerate}
\end{conjecture}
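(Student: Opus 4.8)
The plan is to replace the analytic conjecture by its algebraic avatar, since for a \emph{discrete} periodic operator the Floquet transform turns $L(z)$ into a matrix whose entries are Laurent polynomials in the Floquet multipliers $z=(z_1,\dots,z_n)\in(\C\setminus\{0\})^n$, and the complex Bloch variety becomes the algebraic hypersurface
\begin{equation*}
  \cV_c \;=\; \bigl\{(z,\lambda)\in(\C\setminus\{0\})^n\times\C \;:\; D_c(z,\lambda)=0\bigr\},\qquad D_c(z,\lambda):=\det\bigl(L(z)-\lambda I\bigr),
\end{equation*}
where $c\in\C^m$ collects the free parameters of the operator (edge weights, on-site potentials, or metric coefficients). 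The three Morse conditions of Conjecture~\ref{Con:gener} then become algebraic: a critical point of the projection $(z,\lambda)\mapsto\lambda$ is a common solution of $D_c=0$ and $\nabla_z D_c=0$; condition~(1) (a single band) is the simplicity requirement $\partial_\lambda D_c\neq 0$ there; condition~(2) asks that this critical locus be zero-dimensional; and condition~(3) is the non-vanishing of the appropriate Hessian determinant (the change of variables $z_j=e^{ik_j}$ is nonsingular, so non-degeneracy in $z$ and in $k$ agree).

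First I would set up the dichotomy abstractly. Let $\Sigma\subset\C^m$ be the set of parameters $c$ for which at least one critical point violates one of (1)--(3). Because $\Sigma$ is the image under the coordinate projection $\C^m\times(\C\setminus\{0\})^n\times\C\to\C^m$ of a set cut out by finitely many polynomial equations and inequations, Chevalley's theorem shows $\Sigma$ is constructible, so its Zariski closure $\overline{\Sigma}$ is an algebraic subset. This already forces the wanted alternative: either $\overline{\Sigma}=\C^m$, in which case genericity fails identically, or $\overline{\Sigma}\subsetneq\C^m$ is a proper algebraic subset, and then its complement is a dense Zariski-open set, so a random choice of parameters obeys (1)--(3) ``with probability one.''

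Second --- and this is where the real work lies --- I would have to decide which branch of the dichotomy occurs, i.e.\ prove that $\Sigma$ is \emph{not} all of $\C^m$. Since the cited counterexample shows that for some graphs genericity genuinely fails, no purely formal argument can succeed; one must exhibit a single parameter value $c_0$ at which all of (1)--(3) hold and which therefore lies outside $\Sigma$. This is an explicit elimination problem: form the ideal generated by $D_{c_0}$, the partials $\nabla_z D_{c_0}$, and a slack variable $t$ with $t\cdot(\text{Hessian determinant})-1$ to encode non-degeneracy as a non-vanishing, then certify via a Gr\"obner basis or resultant computation that the resulting system has only finitely many, simple, non-degenerate solutions.

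The main obstacle is twofold. On the conceptual side, the dichotomy does not by itself prove anything; everything hinges on the explicit certificate, and choosing a graph rich enough in parameters that $\Sigma\neq\C^m$ is essential --- the known two-parameter counterexample suggests that too few parameters can trap one on the bad branch, which is why I would pass to the maximal $\Z^2$-periodic two-atomic graph with its nine parameters. On the computational side, that elimination is heavy, and ruling out a positive-dimensional or degenerate critical stratum over a whole open parameter set is delicate, so I expect several complementary techniques to be needed. I should also stress that this program addresses the discrete analog: the original continuous Schr\"odinger conjecture remains out of reach precisely because there the Bloch variety is only analytic (Proposition~\ref{P:analyt}), so Chevalley's theorem and Gr\"obner methods do not apply directly.
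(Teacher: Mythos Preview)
Your outline is essentially the paper's own strategy: reduce to the algebraic (discrete) setting, prove the constructibility/dichotomy result (the paper's Theorem~\ref{T:dichotomy}), and then for the nine-edge diatomic graph exhibit an explicit parameter point $c_0$ at which no degenerate critical points occur. The paper carries this out, including the Gr\"obner-basis certificate you describe (its Example~\ref{L:empty}), and like you it is explicit that the continuous Schr\"odinger case remains untouched.

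There is, however, a genuine gap in your second step as stated. Exhibiting a single $c_0\notin\Sigma$ does \emph{not} by itself decide the dichotomy: if $\overline{\Sigma}=\C^m$, then (being constructible) $\Sigma$ contains a dense Zariski-open set, but its complement is still a nonempty proper closed set, and your chosen $c_0$ could accidentally land there. The paper is explicit about this: the single-point Gr\"obner check is presented only as an ``almost surely'' verification (Corollary~\ref{C:test}), not a proof. To upgrade it to a rigorous argument the paper brings in a separate idea you do not mention. It computes the Bernstein--Kushnirenko mixed-volume bound for the critical-point system (Lemma~\ref{L:MV}, giving $32$), and then checks symbolically that at the specific $c_0=(1,2,3,4,5,6,7,8,1)$ there are \emph{exactly} $32$ nondegenerate critical points. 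Hitting the maximal count forces the projection $CP\to\C^9$ to be an unramified $32$-sheeted cover, hence proper, near $c_0$; the image $DV$ of the degenerate locus is therefore \emph{closed} in a classical neighborhood of $c_0$, so $c_0\notin DV$ yields an entire open ball disjoint from $DV$, and any open ball is Zariski-dense. That properness/closedness step is what converts your probabilistic certificate into an actual proof, and it is the piece your proposal is missing.
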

Notice that (3) implies (2).

This conjecture asserts that generically near a spectral edge the dispersion
relation has a parabolic shape and thus resembles the dispersion
relation at the bottom of the spectrum of the free operator $-\Delta$. This in turn would trigger appearance of various properties analogous to those of the Laplace operator. One can mention, for instance, electron's effective masses in solid state theory \cite{AM,Kittel}, Green's function asymptotics \cite{Bab,Kha,KhaKucRai,KucRai}, homogenization \cite{BirSus,BensHom}, Liouville type theorems \cite{KucPin,KucPin2,AvelLin,Mos,KhaKuc}, Anderson localization  \cite{AizMol},  perturbation of discrete spectra in gaps in general \cite{Bir1,Bir2,Bir3}, and others.

The intuition for formulating this conjecture is explained in \cite{KuchBAMS}.
The progress in proving it has been very slow. We summarize here briefly the successes achieved so far. It is known, that the expected parabolic structure always (not only generically) holds at the bottom of the spectrum of Schr\"odinger operator with a periodic electric potential \cite{KirschSim} (which is not necessarily true if a magnetic field is involved \cite{Ster}). In \cite{KR}, the statement (1) of the conjecture was proven. The full conjecture was proven in \cite{Colin} in $2D$ for any given number of bands and small smooth potentials. The statement (2) was proven in $2D$ \cite{FK} in a stronger form, even without the genericity clause.

\subsection{Discrete version}

It is natural to look first at discrete problems (arising, for instance, as tight binding
approximations in solid state physics~\cite{AM,Kittel}).
Then, the dispersion relation becomes (in the Floquet multipliers coordinates) an algebraic variety~\cite{Gis,KuchDisc}.
Alas, counterexamples to generic non-degeneracy in the discrete \cite{FK}, as well as quantum graph \cite{FK,Berk} case have been known, even for simple two-atomic $2D$ structures~\cite{FK}.

Our first result is that in the discrete periodic (in fact, in a much more general ``algebraically fibered'') case, the following dichotomy holds: either the set of parameters for which there are degenerate critical points is contained in a proper algebraic subset, or it contains the complement of a such subset). Thus, testing a ``random'' sample of parameters should provide an ``almost surely correct'' answer. Other main results are described below.

\subsection{The structure of the paper:}
In Section \ref{S:discrete} we provide a detailed description of the discrete case. The content of this section is well known, but we still provide some definitions and state basic facts to make the text more self-contained.
The result (Theorem \ref{T:dichotomy}) on the dichotomy and its proof are presented in Section \ref{S:main}.
In Section \ref{S:example}, we consider a rather large class of $\Z^2$-periodic graphs, those that are two-atomic, with no multiple edges or loops, and in some sense only with local interactions (i.e, atoms in a cell interact only with atoms in the cells having a common edge with the latter one) and study the maximal graph in this class. Since \cite{FK} provides an example of such graph with a Schr\"odinger operator, for which generic non-degeneracy does not hold and the space of free parameters $\alpha$ is only two-dimensional, we look at a ``divergence type'' operator, which has nine degrees of freedom and thus, as we explain there, better chance for Conjecture~\ref{Con:gener} to hold. And, indeed, we prove (Theorem \ref{T:example}) the conjecture for this graph. The general idea has always been clear: the joint locus of $n$ (in our case, four) polynomials ``should'' have co-dimension $n$. If this were true, the life would have been easy proving the conjecture, and examples like the one in \cite{FK} would not exist. Regretfully, this is true only generically, and checking it in a particular case seems to be really hard. Moreover, as our work shows, there could be ``hidden'' misbehaving components, with smaller co-dimension, which luckily turn out to be irrelevant. Due to complexity of situation and with no clear path forward to more complex structures, we find it useful to present three approaches to establishing the result either rigorously, or ``with high confidence.''
These are based on using algebraic geometry analytic and computational tools for a numerical evidence, an ``almost surely'' verification, and then an actual proof that relies on an exact (but highly non-trivial) count of solutions.
We then use symbolic computation to find all maximal substructures (i.e., with some edges dropped) of the class of graphs studied, for which Conjecture 7 does not hold. An interesting observation is that all these cases have only trivial reasons for degeneracy (obvious multiplicity of some branches of the dispersion relation). Thus, we obtain the answers for the whole class of graphs in question.different

Final remarks can be found in Section \ref{S:remarks} and acknowledgments in Section \ref{S:thanks}.

\section{Description of the discrete case}\label{S:discrete}

We consider a discrete situation, i.e. when the group $G=\Z^n$ acts on a graph $\Gamma$ with the set of vertices $V$ and edges $E$. We write $x\sim y$  for two vertices $x,y\in V$ when there exists an edge connecting them\footnote{It is easy to modify the notions and proofs below for the case when multiple edges are allowed between a pair of vertices.}.

Let us start with making this notion precise. These details, which and more can be found in \cite{BerKuc}, are provided to make the text more self-contained.

\subsection{Periodic graphs}\label{S:periodic}

\begin{definition}\label{D:periodic}
An infinite graph $\G$ is said to be  \textbf{periodic} (or $\Z^n$-periodic) if $\G$ is equipped
with an action of the free abelian group $G=\Z^n$, i.e.   a mapping $(g,x)\in G\times \G \mapsto gx\in\G$, such that
the following properties are satisfied:
\begin{enumerate}
\item {\bf Group action:}\\ For any $g\in G$, the mapping $x\mapsto gx$ is a bijection of $\G$;\\ $0x=x$ for any $x\in\G$, where $0\in G=\Z^n$ is the neutral element;\\ $(g_1g_2)x=g_1(g_2x)$ for any $g_1,g_2\in G, x\in\G$.

\item {\bf Faithful:} If $gx=x$ for some $x\in\G$, then $g=0$.
\item {\bf Discrete:} For any $x\in\G$, there is a neighborhood $U$ of $x$ such that $gx\not\in U$ for $g\neq 0$.
\item {\bf Co-compact:} The space of orbits $\G/G$ is finite. In other words, the whole graph can be obtained by the $G$-shifts of a finite subset.
\item {\bf Structure preservation:}
\begin{itemize}
\item $gu\sim gv$ if and only if $u\sim v$. In particular, $G$ acts bijectively on the set of edges.
\item If other parameters are present (e.g., weights at vertices or at edges), the action preserves their values.
\end{itemize}
 \end{enumerate}
\end{definition}

A simple way to visualize this is to think of a graph $\G$ embedded into $\R^n$ ($n\geq 3$) in such a way that it is invariant with
respect to the shifts by integer vectors $g\in\Z^n\subset\R^n$, which produces an action of $\Z^n$ on $\G$.

\begin{definition}\label{D:fundamental} Due to co-compactness (4), there exists a finite part $W$ of $\G$ such that
\begin{itemize}
\item The union of all $G$-shifts of $W$ covers $\G$,
    $$
    \bigcup\limits_{g\in G} gW=\G.
    $$
\item Different shifted copies of $W$, i.e. $g_1W$ and $g_2W$ with $g_1\neq g_2\in G$, do not share any vertices.
\end{itemize}
Such a  compact subset $W$ is called a {\bf fundamental domain} for the action of $G$ on $\G$.
\end{definition}
\begin{remark}\label{R:Wboundary}
Note that a fundamental domain $W$ is not uniquely defined.
\end{remark}

An example of a fundamental domain is shown in Fig. \ref{F:hexagon}.
\begin{figure}
\begin{center}
\includegraphics[scale=0.5]{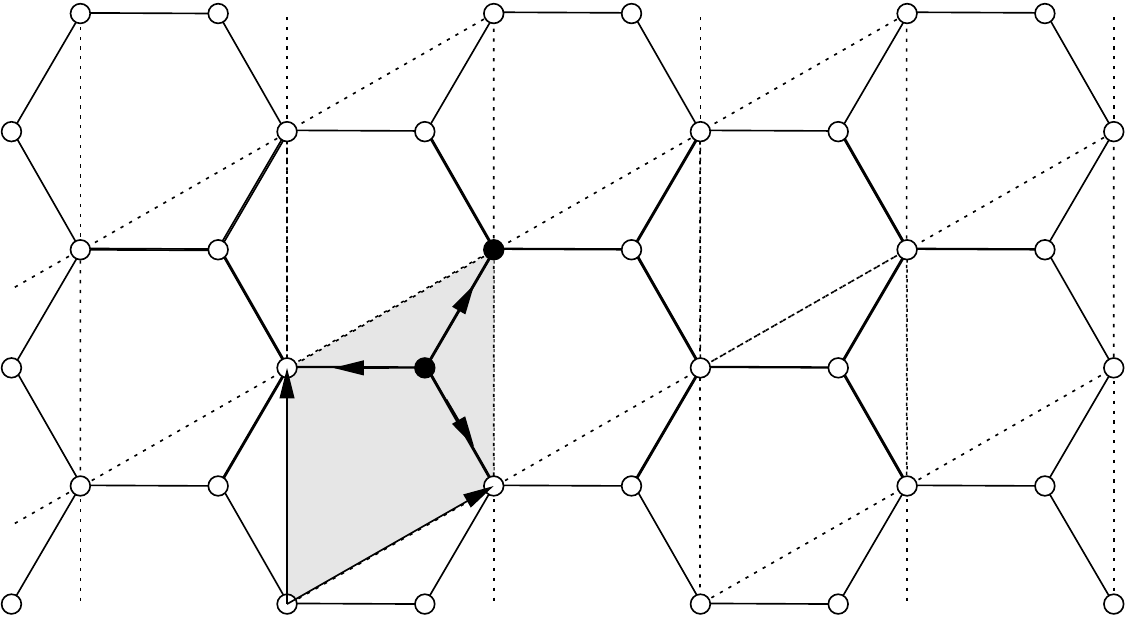}
\end{center}
\caption{A hexagonal ``graphene'' lattice as a graph $\Gamma$  with a two-atom fundamental domain shaded.}\label{F:hexagon}
\end{figure}
%
%

\subsection{Floquet-Bloch theory}\label{S:Floquet}

\subsubsection{Floquet transform on periodic graphs}\label{SS:Floq_trans_comb}
As in the continuous case, the standard idea of harmonic analysis suggests that, as long as we are dealing with a linear problem that commutes with an action of the abelian group $G=\Z^n$, Fourier series expansion\footnote{I.e., expansion into irreducible representations.} with respect to this group should simplify the problem. Its implementation leads to what is known as the \textbf{Floquet transform}. Indeed, what one needs to do is to expand functions on the graph $\G$ into the unitary characters $\g_k$ or, equivalently, $\g_z$.

Let $\G$ be a $\Z^n$-periodic graph and $f$ be a finitely supported (or sufficiently fast decaying) function defined on the
set of vertices $\cV$ of $\G$.

\begin{definition}\label{D:Floq_trans}
We define \textbf{Floquet transform} of $f$ as
\begin{equation}\label{E:Floquet}
  f(v)\mapsto \hat{f}(v,z)=\sum\limits_{g\in\Z^n} f(gv)z^{-g},
\end{equation}
where $gv$ denotes the action of $g\in\Z^n$ on the vertex $v\in V$ and
$z=(z_1,\dotsc,z_n)\in (\C\backslash 0)^n$ is the  Floquet multiplier.

Using quasimomenta instead of Floquet multipliers,~\eqref{E:Floquet} becomes
\begin{equation}\label{E:Floquet-k}
  f(v)\mapsto \hat{f}(v,e^{ik})=\sum\limits_{g\in\Z^n} f(gv)e^{-ik\cdot g}.
\end{equation}
\end{definition}

The reader can notice that~\eqref{E:Floquet}--\eqref{E:Floquet-k} is just the Fourier transform with respect to the action
of $G=\Z^n$ on the set $V$ of vertices.

Basic properties of the Floquet transform can be found in \cite{BerKuc,KuchBAMS}.

\begin{remark}\label{R:chop}
One can also interpret the Floquet transform as follows: one takes a function $\phi$ on $\G$ and cuts $\G$ into non-overlapping pieces by restricting $f$ to the shifted copies $gW$ of a fundamental domain $W$. These pieces are shifted back to $W$ and then are taken as (vector valued) Fourier coefficients of the Fourier series~\eqref{E:Floquet-k} that defines the Floquet transform.
\end{remark}

\begin{definition}
Let $W$ be a (finite) fundamental domain of the action of the group $G=\Z^n$ on $\G$. We will denote
$\hat{f}(v,z)|_{v\in W}$ by $\hat{f}(z)$, where the latter expression is considered as a function of $z$ with values in the space of functions on $W$. In other words, $\hat{f}(z)$ takes values in $\C^{|W|}$.
\end{definition}


\subsubsection{Floquet transform of periodic difference operators}\label{SS:Floq_trans_diffop}
Let $A$ be a difference operator on a $\Z^n$-graph $\G$. In other words, $A$ is an infinite $|V|\times|V|$ matrix.
We assume that $A$ has finite order, meaning that in each row it has only finitely many non-zero entries (in other words, only
finitely many neighbors of each vertex are involved).
We also assume that $A$ is periodic, i.e.\ commuting with the action of the group $\Z^n$. After Floquet transform this operator becomes the operator of multiplication by a matrix $A(z)$ of size $|W|\times|W|$ depending rationally on the Floquet multiplier $z$ (or analytically on the quasimomentum $k$).

For the reader, who has not done these simple calculations for a periodic graph (apologies for other readers), we provide the example of the Laplace operator on the regular ``graphene'' $2D$ lattice $\G$ (see Fig. \ref{fig:hexagon}).
\begin{figure}[ht!]
  \begin{picture}(0,0)%
    \includegraphics{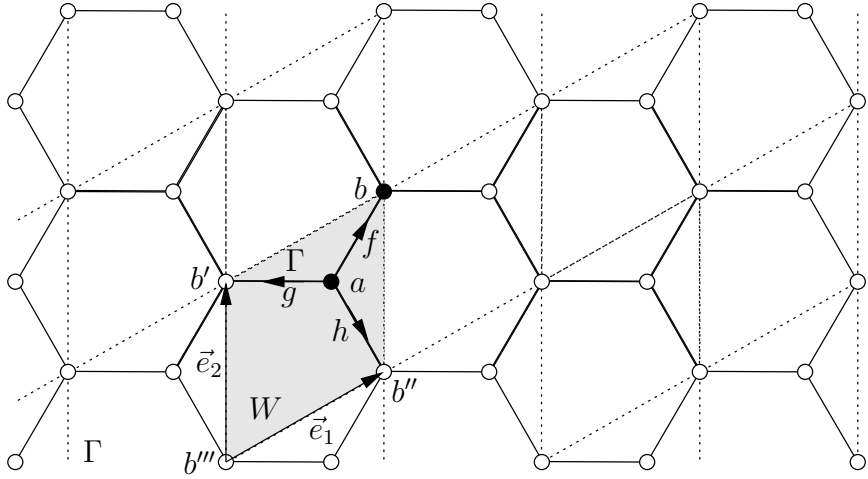}%
  \end{picture}%
  \setlength{\unitlength}{4144sp}%
  \begin{picture}(5146,2865)(83,-1973)
    \put(2386,-1501){$b''$}
    \put(2206,-601){$f$}
    \put(2161,-286){$b$}%
    \put(541,-1861){$\G$}%
    \put(1182,-826){$b'$}%
    \put(2136,-844){$a$}%
    \put(1533,-1604){$W$}%
    \put(2026,-1136){$h$}%
    \put(1726,-900){$g$}%
    \put(1751,-717){$\G$}%
    \put(1142,-1933){$b'''$}%
    \put(1216,-1321){$\vec e_2$}%
    \put(1891,-1726){$\vec e_1$}%
  \end{picture}%
  \caption{The hexagonal lattice $\G$ and a fundamental domain $W$
    together with its set of vertices $V(W)=\{a,b\}$ and
    set of edges $E(W)=\{f,g,h\}$.}
  \label{fig:hexagon}
\end{figure}
The group $\Z^2$ acts on $\G$ by the shifts by vectors
$p_1 e _1+p_2 e _2$, where $(p_1,p_2)\in\Z^2$ and vectors
$e_1=(3/2,\sqrt 3/2)$, $e_2=(0,\sqrt 3)$ are shown in
Figure \ref{fig:hexagon}. We choose as a fundamental domain (Wigner-Seitz cell) of this
action the shaded parallelogram region $W$.
Two black vertices $a$ and $b$ belong to $W$, while $b'$, $b''$, and $b'''$ lie in shifted copies of $W$.
Three edges $f,g,h$, directed as shown in the picture, belong to $W$.

We consider the Laplace operator
$$
Af(v)=\sum\limits_{w\sim v}f(w)-3f(v).
$$
One can find the ``symbol'' $A(z)$ (or $A(k)$ in terms of the quasimomenta) by applying $A$ to functions $f$ automorphic
with the character $z=e^{ik}$. This leads to
%
\begin{equation}
\begin{array}{ccc}
(Af)(a)&=&-3f(a)+(z_1^{-1}+z_2^{-1}+1)f(b)\vspace{1pt}\\
       &=&-3f(a)+(e^{-ik_1}+e^{-ik_2}+1)f(b),\ \mbox{and}\vspace{3pt}\\
(Af)(b)&=&(z_1+z_2+1)f(a)-3f(b)\vspace{1pt}\\
       &=&(e^{ik_1}+e^{ik_2}+1)f(a)-3f(b).
\end{array}
\end{equation}
We thus obtain the expression for the ``symbol'' of $A$:
\begin{equation}
\begin{array}{ccc}
A(z)&=&\left(
       \begin{array}{cc}
         -3 & z_1^{-1}+z_2^{-1}+1 \\
         z_1+z_2+1 & -3\\
       \end{array}
     \right) \vspace{5pt}\\
A(k)&=&\left(
       \begin{array}{cc}
         -3 & e^{-ik_1}+e^{-ik_2}+1 \\
         e^{ik_1}+e^{ik_2}+1 & -3 \\
       \end{array}
     \right)
\end{array}
\end{equation}
The matrix $A(z)$ depends rationally on $z$-variables.
In fact, it is a Laurent polynomial.
Since variables $z$ belong to the unit torus $\T^n$, no singularities of
$A(z)$ appear there.  It is thus possible to multiply the matrix $A(z)$ by $z_1^mz_2^m\dots
z_n^m$
with a sufficiently high power $m$, so the resulting matrix $\widetilde{A}(z)$ has
polynomial entries. E.g., in the above example, by multiplying by $z_1 z_2$ (i.e. $m=1$), the dispersion relation in the $(z, \lambda)$ space can be given as follows
\begin{equation}\label{E:dispdisc_set}
\{(z,\lambda)\,|\,\det\left(\widetilde{A}(z)-\lambda z_1z_2\right)=0\}, \text{ where }
\end{equation}
$$
\begin{array}{c}
\widetilde{A}(z)=\left(
       \begin{array}{cc}
         -3z_1z_2 & z_1+z_2+z_1z_2 \\
         z_1^2z_2+z_1z_2^2+z_1z_2 & -3z_1z_2\\
       \end{array}
     \right)
\end{array}
$$
The dispersion relation is  an algebraic variety of codimension 1 in $\C^3$.

An analogous construction holds for general finite order periodic
difference operators on graphs, with $z_1z_2$ being replaced by the product
$z_1^mz_2^m\dots z_n^m$.

We can consider the equation
\begin{equation}\label{E:dispdisc_Eq}
\Phi(z,\lambda):=\det\left(\widetilde{A}(z)-\lambda z_1^mz_2^m\dots z_n^m\right)=0
\end{equation}
as an implicit description of the graph of a multiple-valued function
$$
F: z\mapsto \lambda
$$ that
shows dependence of the spectrum of $\widetilde{A}$ on the parameters $z$.

The matrix $A$ depends polynomially on extra parameters $\alpha$ (e.g., weights at vertices and/or edges,
potentials, etc.),
\begin{equation}\label{E:dispdiscParam}
\Phi(\alpha,z,\lambda):=\det\left(\widetilde{A}(\alpha,z)-\lambda z_1^mz_2^m\dots z_n^m\right)=0.
\end{equation}
Correspondingly, we have a family of functions
$$
F_\alpha:=F(\alpha,\cdot)\colon z\mapsto \lambda.
$$
Thus, the question can be reformulated as follows:
\begin{center}
\textbf{Does non-degeneracy of all critical points of the function $F_\alpha$ on\\ the torus  hold generically with respect
  to the parameters $\alpha$?}
\end{center}
For the particular class of discrete graphs considered in the rest of the paper, we prove genericity in the strongest sense: as being valid outside of a proper algebraic subset. This is clearly not expected to happen for PDEs. Establishing in the latter situation any weaker type of genericity would be valuable.

\section{The critical point dichotomy}\label{S:main}

The matrix $A(\alpha,z,\lambda)$ introduced above, and thus
$\widetilde{A}(\alpha,z,\lambda)$ as well, has a very special structure, due to the periodicity. However, an important dichotomy holds in a very general
situation, without any connection to periodicity.

We now formulate and prove the following dichotomy statement.

\begin{theorem}\label{T:dichotomy}
 Let $U\subset\C^n$ be a neighborhood of the torus $\T^n$, $P(z)$ be a polynomial, and
$A(\alpha,z)$ be a finite size matrix polynomially dependent on the parameters $\alpha\in\C^m$
 and $z\in\C^n$.
 Consider for any $\alpha$ the equation
\begin{equation}\label{E:dispdisc_Eqn}
\Phi_\alpha(z,\lambda):=\det\left( A(\alpha,z)-\lambda P(z) I\right)=0
\end{equation}
as an implicit description of the graph of a multiple-valued function
$$
F_\alpha\colon z\mapsto \lambda,
$$
that shows dependence of the (weighted by $P(z)$) spectrum of $A$ on the parameters $z$.

Then the set $DV$ of points $\alpha$ where $F_\alpha$ has a degenerate critical point on (and near to) torus $\T^n$, either
belongs to a proper algebraic subset of $\C^m$ or it contains the complement of such a set.
\end{theorem}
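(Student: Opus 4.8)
The plan is to translate ``degenerate critical point'' into polynomial equations in $(\alpha,z,\lambda)$, bundle them into a complex incidence variety, and then read the dichotomy off a single dimension count for the projection to the $\alpha$-space, leaving only the passage from $\C^n$ to a neighborhood of the real torus as the delicate point.

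First I would linearize. On the Zariski-open set where $\partial_\lambda\Phi_\alpha\neq0$ the implicit function theorem writes $\lambda=F_\alpha(z)$ locally and gives $\partial_{z_i}F_\alpha=-\partial_{z_i}\Phi_\alpha/\partial_\lambda\Phi_\alpha$; thus $z$ is critical exactly when $\partial_{z_1}\Phi_\alpha=\dots=\partial_{z_n}\Phi_\alpha=0$. Differentiating once more at such a point kills the first-order terms and yields $\mathrm{Hess}\,F_\alpha=-(\partial_\lambda\Phi_\alpha)^{-1}H_\alpha$, where $H_\alpha:=(\partial_{z_i}\partial_{z_j}\Phi_\alpha)_{i,j}$, so the critical point is degenerate precisely when $\det H_\alpha=0$. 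All three conditions are polynomial in $(\alpha,z,\lambda)$.

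Next I would form the algebraic set $W\subset\C^m\times\C^n\times\C$ cut out by $\Phi_\alpha=0$, by $\partial_{z_j}\Phi_\alpha=0$ for $j=1,\dots,n$, and by $\det H_\alpha=0$, with the projection $\pi\colon W\to\C^m$, $(\alpha,z,\lambda)\mapsto\alpha$. The $n{+}1$ equations $\Phi_\alpha=0$, $\partial_{z_j}\Phi_\alpha=0$ in the $n{+}1$ unknowns $(z,\lambda)$ cut out, for generic $\alpha$, a finite critical set, so the dominating part of their incidence variety $C$ has dimension $m$ with $\pi$ generically finite (any non-dominating component already projects into a proper subvariety). The whole dichotomy now hinges on how the one extra equation $\det H_\alpha=0$ meets $C$, which is detected by the resultant $\Delta(\alpha)$ obtained by eliminating $(z,\lambda)$. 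If $\det H_\alpha$ does not vanish identically on any dominating component of $C$, then the extra equation drops the dimension, $\overline{\pi(W)}\subseteq\{\Delta=0\}$ is a proper subvariety, and since $DV\subseteq\pi(W)$ regardless of the torus restriction, the first alternative holds with no further work. The entire difficulty is therefore concentrated in the opposite case $\Delta\equiv0$, in which $\det H_\alpha\equiv0$ on some component $C_0$ of $C$, that is, an entire branch of critical points is forced to be degenerate.

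In that case $\pi(C_0)$ is a dense constructible subset of $\C^m$ (Chevalley), hence contains a Zariski-open dense $\Omega$, over which a degenerate critical point exists \emph{somewhere} in $(\C\setminus\{0\})^n$. \emph{The main obstacle is to place such a point inside the neighborhood $U$ of the real torus for a Zariski-open set of $\alpha$}: restricting to $z\in U$ only guarantees (choosing $U$ semialgebraic and invoking Tarski--Seidenberg) that the $\alpha$-image is semialgebraic, and one must rule out a dense but non-cofinite ``annulus'' image. I would argue on $C_0$: the near-torus locus $C_0\cap\{z\in U\}$ is a nonempty Euclidean-open, hence Zariski-dense, subset of the irreducible $C_0$, so it contains a point $q=(\alpha_1,z_1,\lambda_1)$ at which $\pi$ is submersive, and the holomorphic implicit function theorem then produces a degenerate critical point with $z\in U$ for all $\alpha$ in a Euclidean ball. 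Upgrading this Euclidean neighborhood to the complement of a proper subvariety is the crux; here I would bring in the self-adjoint structure underlying the operator: for real $\alpha$ the symbol is Hermitian on $\T^n$, each band function is real and continuous on the compact torus and hence attains its extrema there, so near-torus critical points exist for \emph{every} $\alpha$, and combined with $\det H_\alpha\equiv0$ on $C_0$ this forces genuine torus critical points to be degenerate on a Zariski-open dense set, giving the second alternative. Finally, the auxiliary locus $\partial_\lambda\Phi_\alpha=0$, where $F_\alpha$ is not a single holomorphic branch, projects into a proper subvariety of $\C^m$, so removing it shrinks $\Omega$ only by a proper algebraic subset and does not affect the conclusion.
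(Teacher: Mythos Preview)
Your core strategy matches the paper's: encode ``degenerate critical point'' by the $n{+}2$ polynomial equations $\Phi_\alpha=0$, $\partial_{z_j}\Phi_\alpha=0$, $\det H_\alpha=0$, form the incidence variety, project to $\C^m_\alpha$, and read the dichotomy from whether the (Zariski closure of the) image is proper or all of $\C^m$. The paper does exactly this, in about half a page, invoking only that the closure of the projection of an algebraic set is algebraic and then splitting on its dimension. Your implicit-differentiation computation of the Hessian is correct and more explicit than the paper's sketch.

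Where you diverge is in the treatment of the second alternative. The paper simply says: if the closure of $DV$ is $\C^m$, then for generic $\alpha$ there are degenerate critical points---and stops. You correctly notice that restricting $z$ to a neighborhood $U$ of $\T^n$ makes the passage from ``Zariski-dense image'' to ``contains the complement of a proper subvariety'' genuinely delicate, and you try to close this with a self-adjointness argument. But the theorem as stated has no self-adjointness hypothesis: $A(\alpha,z)$ is an arbitrary polynomial matrix, and there is no reason the symbol should be Hermitian on $\T^n$ for real $\alpha$. So that step is not available to you. Moreover, even granting Hermiticity, your claim that torus critical points are forced onto the particular component $C_0$ where $\det H_\alpha$ vanishes identically is unjustified---there may be other components of $C$ carrying the real critical points, on which the Hessian is generically nondegenerate.

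In short: your setup and the first alternative are fine and agree with the paper; your attempt to strengthen the second alternative introduces an assumption not in the hypotheses and still has a gap. The paper's own proof is terse on precisely this point and does not attempt the refinement you are reaching for.
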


\begin{proof}
Let us first define the set $DC\subset \C^m\times U\times\C$ of points
$(\alpha,z,\lambda)$, where one encounters a degenerate critical
point of $F_\alpha$.
Projecting into the $\alpha$-space $\C^m$, we obtain the set $DV$ of parameters
$\alpha$ that we are interested in.

This can be easily done in terms of the function $\Phi_\alpha(z,\lambda)$, by using the
condition that this function, as well as implicitly computed gradient and Hessian of
$\lambda$ with respect to $z$ all vanish. This, indeed, can be done by implicit
differentiation using the equation $\Phi_\alpha(z,\lambda)=0$:
\begin{equation}\label{E:equations}
\begin{cases}
\Phi_\alpha(z,\lambda)=0,\\
\frac{\partial \lambda}{\partial z_j}=0 \mbox{ for all }j=1,\dots,n,\\
\det\left(\frac{\partial^2\lambda}{\partial z_i\partial z_j}\right)=0.
\end{cases}
\end{equation}
The gradient and Hessian of $\lambda$ with respect to $z$ can be obtained by implicitly differentiating equation
$\Phi_\alpha(z,\lambda)=0$.
This produces rational expressions whose vanishing is equivalent to vanishing of their polynomial numerators.
Thus, one obtains the system of $n+2$
polynomial equations
\begin{equation}\label{E:system}
\begin{cases}
\Phi_\alpha(z,\lambda)=0,\\
P_j(\alpha,z,\lambda)=0,\ \mbox{ for } j=1,\dots,n,\\
H(\alpha,z,\lambda)=0,
\end{cases}
\end{equation}
which describes the set $DC$.
We ask: how large can the projection $DV:=\pi DC$ of $DC$ into the space $\C^m_\alpha$ of
parameters $\alpha$ be? As a projection of an algebraic set, its closure is algebraic.
If the dimension of the projection is less than $m$, then it is a proper subset of $\C^m$
and we have genericity of the parameters for which all critical points are nondegenerate.
The alternative is that the closure of $DV$ is $\C^m$, so that for generic $\alpha$ there are degenerate
critical points.
In this last case, there may yet be a proper algebraic set of parameters $\alpha$ for which all critical points are
nondegenerate.
\end{proof}

Our desire is to have the set $DV$ ``small,'' i.e. the first alternative of Theorem \ref{T:dichotomy} to take place.
However, as we have already
mentioned, even for ``two-atomic'' periodic discrete structures this is not necessarily the case~\cite{FK}. So, how can one tell in which of two options of the dichotomy we are in any particular situation? While we do not have a complete
answer to this, the following ``random test'' follows from Theorem~\ref{T:dichotomy}.
Let us pick a value of $\alpha$ \textbf{``randomly''} and compute the dispersion relation.
If it has no degenerate critical points, then we know ``with probability one'' that $DV$ is contained in a proper algebraic
subset, and thus non-degeneracy is generic.
If instead we determine that $\alpha\in DV$, then we know that ``with probability $1$'' degeneracy is generic.
Indeed, the chances for a randomly selected point to belong to a given proper algebraic subset are zilch.

\begin{corollary}\label{C:test}
 A random sample of parameters with respect to any absolutely continuous probability distribution gives the correct answer (generic non-degeneracy or not) with probability 1.
\end{corollary}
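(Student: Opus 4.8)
The plan is to deduce the corollary directly from the dichotomy of Theorem~\ref{T:dichotomy} together with the elementary fact that a proper algebraic subset of $\C^m$ is Lebesgue-null. First I would fix an absolutely continuous probability measure $\mu$ on $\C^m\cong\R^{2m}$ and record two standing facts. (i) The zero locus of any nonzero polynomial on $\R^{2m}$ has Lebesgue measure zero, so any proper algebraic subset $Z\subsetneq\C^m$ satisfies $\mu(Z)=0$, since $\mu\ll\mathrm{Leb}$ assigns measure zero to every Lebesgue-null set. (ii) The set $DV=\pi(DC)$ is a projection of an algebraic set, hence constructible by Chevalley's theorem and in particular Lebesgue-measurable, so $\mu(DV)$ is well defined.

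Next I would split according to the two alternatives supplied by Theorem~\ref{T:dichotomy}, observing first that for $m\geq 1$ they are mutually exclusive: were $DV$ simultaneously contained in a proper algebraic set and containing the complement of one, a full-measure set would sit inside a null set, which is absurd. This ensures that ``the correct answer'' is unambiguous. In the first alternative, $DV\subseteq Z$ for a proper algebraic $Z$, so $\mu(DV)\leq\mu(Z)=0$; thus with $\mu$-probability $1$ the sampled $\alpha$ lies outside $DV$ and $F_\alpha$ has only non-degenerate critical points on and near $\T^n$. Here the truthful answer is ``non-degeneracy is generic,'' and this is exactly what the random test reports. In the second alternative, $\C^m\setminus Z'\subseteq DV$ for a proper algebraic $Z'$, so $\mu(DV)\geq 1-\mu(Z')=1$; hence with probability $1$ the sampled $\alpha$ lies in $DV$ and $F_\alpha$ carries a degenerate critical point, and the test correctly reports ``degeneracy is generic.''

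Finally I would combine the two cases: in each, the outcome of the random test coincides almost surely with the true alternative of the dichotomy, which establishes the claim. I do not expect a serious obstacle here, as the corollary is essentially a translation of the algebraic dichotomy of Theorem~\ref{T:dichotomy} into measure-theoretic language. The only points demanding a little care are the measurability of $DV$ and the null-measure of proper algebraic subsets; both are standard, the former via constructibility of projections and the latter via the fact that a nontrivial polynomial on $\R^{2m}$ cannot vanish on a set of positive Lebesgue measure.
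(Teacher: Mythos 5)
Your proposal is correct and follows essentially the same route as the paper: the paper deduces the corollary from the dichotomy of Theorem~\ref{T:dichotomy} together with the observation that a proper algebraic subset of $\C^m$ is Lebesgue-null (``the chances for a randomly selected point to belong to a given proper algebraic subset are zilch''). Your write-up merely makes explicit the measure-theoretic details the paper leaves informal (measurability of $DV$ via constructibility, and the mutual exclusivity of the two alternatives), which is a faithful rigorization rather than a different argument.
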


We also formulate the following conjecture:
\begin{conjecture}\label{C:upscaling}
Generic non-degeneracy survives under extending the set of parameters, e.g. if in addition to varying the potential, we start varying the metric as well.  In other words, changing more parameters cannot make the situation worse.
\end{conjecture}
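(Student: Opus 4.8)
The plan is to deduce the conjecture from two ingredients: the openness of the non-degeneracy condition in the Euclidean topology on parameter space, together with the constructibility of the bad set $DV$ that already underlies Theorem~\ref{T:dichotomy}. Throughout I model ``extending the set of parameters'' as enlarging $\alpha\in\C^m$ to $\beta=(\alpha,\alpha')\in\C^{m'}$ so that the original family is recovered by freezing the new parameters at a value $\alpha'_0$; that is, $A'((\alpha,\alpha'_0),z)=A(\alpha,z)$, so the small family is the affine slice $\{\alpha'=\alpha'_0\}$ of the large one.

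First I would record the algebraic skeleton. Let $DC\subset\C^m\times U\times\C$ and $DC'\subset\C^{m'}\times U\times\C$ be the degenerate-critical-point incidence varieties of system~\eqref{E:system} for the two families, and let $DV=\pi(DC)$, $DV'=\pi'(DC')$ be their projections to parameter space. Because freezing $\alpha'=\alpha'_0$ reproduces the small operator, one has the identities $DC=DC'\cap\{\alpha'=\alpha'_0\}$ and $DV=DV'\cap\{\alpha'=\alpha'_0\}$, intersections with an affine-linear subspace of codimension $m'-m$; the affine dimension theorem then gives $\dim DC'\le\dim DC+(m'-m)$ whenever $DC$ is nonempty. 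These are the bookkeeping facts that reconcile ``more parameters'' with ``the same number of equations,'' since the count $n+2$ in~\eqref{E:system} depends only on $n$.

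The conceptual step is the following. By Theorem~\ref{T:dichotomy}, genericity for the small family means $DV$ lies in a proper algebraic subset, so its complement contains a nonempty Euclidean-open set of good parameters; choose $\alpha_0$ there, so $F_{\alpha_0}$ has only non-degenerate critical points near $\T^n$. The point $\beta_0=(\alpha_0,\alpha'_0)$ defines the same operator, hence is good for the large family. If the non-degeneracy condition is open in $\beta$, then a whole Euclidean neighborhood of $\beta_0$ in $\C^{m'}$ consists of good parameters, so the good set $\C^{m'}\setminus DV'$ has nonempty interior. Since $DV'$ is constructible (a projection of an algebraic set), a complement with nonempty interior is Zariski dense and therefore contains a dense Zariski-open set, forcing $DV'$ into a proper algebraic subset; by the dichotomy this is precisely genericity for the large family. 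Thus the whole statement reduces to openness.

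The hard part will be establishing that openness. On the compact real torus $\T^n$ it is the standard Morse-theoretic fact that finitely many non-degenerate critical points persist and remain non-degenerate under small perturbations, with no new critical points nucleating (a birth would require a prior degenerate point). The genuine difficulties are (i) that Theorem~\ref{T:dichotomy} tracks complex critical points in the neighborhood $U$ of $\T^n$, where under perturbation critical points can migrate through $\partial U$ or drift onto the locus $\partial\Phi/\partial\lambda=0$ along which the implicit Hessian in~\eqref{E:equations} is undefined, and (ii) band crossings, where eigenvalues of $A(\alpha,z)$ collide and the branch $F_\alpha$ ceases to be smooth. Controlling these requires using the weighting $P(z)$ and the non-vanishing of $\partial\Phi/\partial\lambda$ away from multiplicities to confine critical points to a fixed compact part of $U$, uniformly in $\beta$ near $\beta_0$. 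A purely algebraic alternative avoids openness but incurs its own gap: combining $\dim DC'\le\dim DC+(m'-m)$ with the dichotomy gives the conclusion provided $\dim DC=\dim\overline{DV}$ and $\dim DC'=\dim\overline{DV'}$, i.e.\ provided degenerate critical points are generically isolated in $z$ --- a statement close in spirit to parts (2)--(3) of Conjecture~\ref{Con:gener} and itself not automatic. Either way, the crux is to rule out the pathological scenario in which degeneracy is dense in $\C^{m'}$ while the slice $\{\alpha'=\alpha'_0\}$ stays exceptionally non-degenerate; showing that this cannot arise from actual operators is where the real work lies.
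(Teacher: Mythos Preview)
The paper does not prove Conjecture~\ref{C:upscaling} in the generality you attempt; it establishes it only for the specific two-atomic graph of Section~\ref{S:example}, as Theorem~\ref{T:32}. Your openness-plus-dichotomy framework is the right conceptual picture, and you have correctly isolated the crux: one must rule out that the good slice $\{\alpha'=\alpha'_0\}$ is an exceptional non-degenerate locus inside an otherwise generically degenerate larger family. The paper's resolution of this---in its special case---is different from both of your proposed routes, and in fact delivers the openness you want for free.

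The paper's argument runs as follows. The three polynomials $g_1,g_2,g_3$ defining $CP$ (characteristic equation plus the two critical-point equations) have Newton polytopes in the variables $(z_1,z_2,\lambda)$ that do not depend on $\alpha$ at all; the $\alpha_j$ enter only as coefficients. Bernstein's theorem therefore gives a \emph{parameter-independent} upper bound of $32$ on the number of isolated critical points, regardless of how many $\alpha$'s are present (Lemma~\ref{L:MV}). Having exhibited one $\alpha_0$ with exactly $32$ nondegenerate critical points, one knows that near $\alpha_0$ the projection $CP\to\C^m$ is a proper $32$-sheeted cover, so $DV$ is closed near $\alpha_0$ and misses a Euclidean ball. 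Embedding the same $\alpha_0$ into any larger parameter space changes nothing: the Newton polytopes are unchanged, the bound is still $32$, the same $32$ nondegenerate critical points are present, and the identical argument yields genericity in $\C^{m'}$. That is the entire proof of Theorem~\ref{T:32}.

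Your two routes each hit the obstacle you yourself flagged. The Morse-stability route is sound over the compact real torus $\T^n$, but the dichotomy of Theorem~\ref{T:dichotomy} is a statement about the complex constructible set $DV'$; a nonempty open set of good \emph{real} parameters does not immediately force $\overline{DV'}\subsetneq\C^{m'}$ without controlling complex critical points that escape $U$ or run into $\partial\Phi/\partial\lambda=0$. The dimension-counting route needs the generic fiber of $DC'\to\overline{DV'}$ to be finite, which is exactly what fails in flat-band examples. The paper's insight---that the Bernstein mixed-volume bound is blind to the dimension of $\alpha$-space---supplies properness, and hence openness, without these side conditions; but it is tied to having an explicit $\alpha_0$ achieving the bound, so it does not yield the general conjecture either.
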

We prove this for the class of graphs studied below.
\section{An example and three alternative approaches}\label{S:example}
One can ask whether one can avoid a random choice of parameters.
As we have implied at the end of the Introduction, the answer is ``yes,'' but it is not that easy to implement.
Namely, there are $n+2$ polynomial equations~\eqref{E:system} determining the set $DC$.
If the codimension of $DC$ were exactly $n+2$, then projecting onto the space $\C^m_\alpha$ along the $(n+1)$-dimensional
$U\times\C$ would produce a set $DV$ of at least codimension $1$ and thus for generic parameters $\alpha$, all critical
points would be nondegenerate. The end of the story! This dimension-counting was part of our intuition behind Conjecture~\ref{Con:gener}.
Unfortunately, the codimension of an algebraic set (or at least of some of its irreducible components) could be less than the number of the defining equations (in our case, $n+2$), as for instance the example of~\cite{FK} confirms. So, one can try to figure out the dimensions (and thus codimensions) of the irreducible components, which is, as the example below shows, sometimes possible, but far from being easy.

\subsection{An example of a discrete structure}
We consider the discrete periodic graph $\Gamma$ shown in Fig.~\ref{F:sample}.
The square fundamental domain $W$ contains two vertices (``atoms'') $a$ and $b$ and nine edges shown with solid lines.
We allow connections inside $W$ and to its four adjacent copies, introducing thus more free parameters, which hopefully would make the Conjecture \ref{Con:gener} more likely to hold, if Conjecture \ref{C:upscaling} is to be believed. No loops or multiple edges are allowed.
Shifts of $W$ by integer linear combinations of basis vectors $e_1$ and $e_2$ tile the plane.
We write $V$ and $E$ for the sets of vertices and edges of $\Gamma$ respectively.

This graph and its periodic sub-graphs describe all $2D$-periodic\footnote{Most of them non-planar).} two-atomic ``nearest cell interaction'' structures without loops and multiple edges.

The graph $\Gamma$ is equipped with a periodic \textbf{weight} function $\alpha$ (an analog of a metric, or an anisotropic
diffusion coefficient) that assigns to each edge a non-negative number.
\begin{figure}[ht!]
  \centering
  \begin{picture}(240,240)
    \put(-0.5,-0.5){\includegraphics{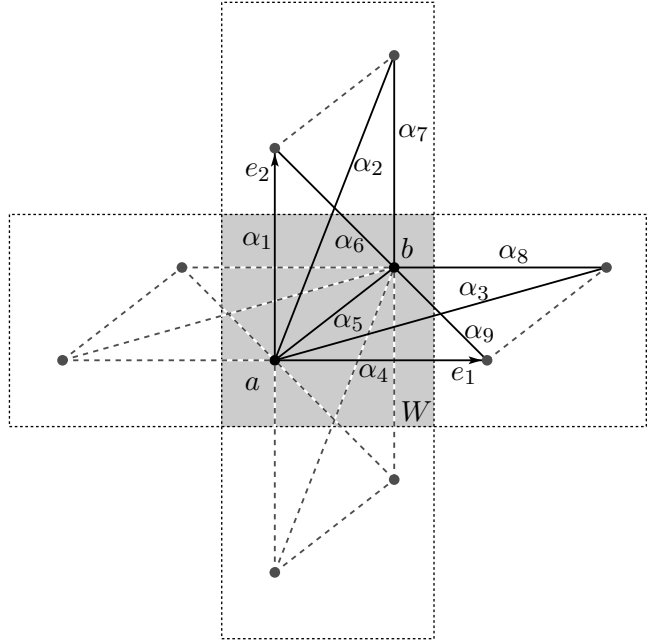}}
    \put(89,94){\small$a$}    \put(148,144){\small$b$}     \put(148,82.5){\small$W$}
    \put(167, 98){\small$e_1$} \put(89,175){\small$e_2$}
    \put( 88,149){\small$\alpha_1$}
    \put(130,177){\small$\alpha_2$}
    \put(170,130){\small$\alpha_3$}
    \put(132, 99){\small$\alpha_4$}
    \put(122.5,118){\small$\alpha_5$}
    \put(123,148){\small$\alpha_6$}
    \put(147,190){\small$\alpha_7$}
    \put(185,144){\small$\alpha_8$}
    \put(172,114){\small$\alpha_9$}
  \end{picture}
  \caption{The basic period vectors are $e_1$ and $e_2$.
    The fundamental domain $W$ contains two atoms $a$ and $b$ and nine (solid) edges. The dotted edges and other atoms are
    obtained by shifting the fundamental domain by integer linear combinations of $e_1, e_2$. Numbers $\alpha_j$ are
    weights associated with the solid edges.}\label{F:sample}
\end{figure}

\begin{remark}\label{R:moreparam}
Notice that the graph is diatomic, like in \cite{FK}, but the freedom of choosing parameters is nine-dimensional, while it was only two-dimensional in \cite{FK}. The intuition (Conjecture \ref{C:upscaling}) is that this should help the genericity.
\end{remark}

Let us denote the set of non-negative real numbers by $\R_+$.
Given $\alpha=(\alpha_1,\ldots, \alpha_9)\in \R_+^9$, we can assign the weights $\alpha_j, j=1,\dotsc,9$, to the edges
from the fundamental domain $W$ as shown in Fig.~\ref{F:sample}.
The entire structure $\G$ and all edge weights can be obtained from $W$ by $\mathbb{Z}^2$-shifts (with the basis $e_1,e_2$).
Define a divergence\footnote{This is a discrete analog of a second order divergence type elliptic partial differential operator.} (or Laplace-Beltrami) type operator $L_\alpha$ acting on the graph $\G$ as follows:
\begin{equation}
L_\alpha f(u)=\sum_{e=(u,v)\in E} \alpha(e)(f(u)-f(v)),
\label{E:L}
\end{equation}
where $u, v\in V$ and $\alpha(e)$ is the weight of edge $e$.
When this does not lead to confusion, we will use the notation $L$ instead of $L_\alpha$.

For each $k=(k_1,k_2)$ from the Brillouin zone $B=[-\pi,\pi)^2$,
let $L(k)$ be the \textbf{Bloch Laplacian}. Since there are two vertices inside the fundamental domain, each  operator $L(k)$ (or $L(z)$ in the multiplier notations)  acts on a two-dimensional space of
functions defined on the two atoms, so
%
$$
\sigma(L)=\mathop{\bigcup}\limits_{k\in B} \sigma(L(k))=\mathop{\bigcup}\limits_{k\in B}\{\lambda_1(k),\lambda_2(k)\},
$$
where $\lambda_1(k)\leq\lambda_2(k)$.

Our second main result is the following.

\begin{theorem}
  The dispersion relation of the operator $L_\alpha$ generically
  (i.e., outside of an algebraic subset of the parameters $\alpha$) satisfies all three conditions of
  Conjecture~\ref{Con:gener}.
\label{T:example}
\end{theorem}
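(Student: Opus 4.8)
The plan is to combine the general dichotomy of Theorem~\ref{T:dichotomy} with an \emph{exact solution count} that certifies the favorable alternative. Since the fundamental domain has two atoms, the symbol $A(\alpha,z)$ in~\eqref{E:dispdisc_Eqn} is a $2\times 2$ matrix, so $\Phi_\alpha(z,\lambda)=\det(A(\alpha,z)-\lambda P(z)I)$ is \emph{quadratic} in $\lambda$ and the dispersion relation has exactly two band functions $\lambda_1\le\lambda_2$, namely
$$
\lambda_{\pm}(z)=\frac{A_{11}+A_{22}}{2P}\pm\frac{1}{2P}\sqrt{(A_{11}-A_{22})^2+4A_{12}A_{21}}\,.
$$
I would first write $\Phi_\alpha$ out explicitly (with $P=z_1z_2$, as in~\eqref{E:dispdisc_set}) and record the three conditions of Conjecture~\ref{Con:gener} as algebraic conditions on $(\alpha,z,\lambda)$: condition (3) is the vanishing of the Hessian polynomial $H$ from~\eqref{E:system}; condition (2) is implied by (3); and condition (1) fails exactly when the two bands share a critical value or an extremum sits at a band-touching $\Delta:=(A_{11}-A_{22})^2+4A_{12}A_{21}=0$, which on $\T^2$ (where $A$ is Hermitian, so $A_{21}=\overline{A_{12}}$ and $\Delta=(A_{11}-A_{22})^2+4|A_{12}|^2\ge 0$) occurs only at the isolated conical points $A_{11}=A_{22}$, $A_{12}=0$.

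Next I would reduce the whole theorem to the production of a \emph{single} good parameter. The failure locus of each of (1)--(3) is the image under the projection $\pi\colon\C^9_\alpha\times U\times\C\to\C^9_\alpha$ of an algebraic incidence set, so the reasoning of Theorem~\ref{T:dichotomy} applies verbatim to each: its closure is either a proper algebraic subset of $\C^9$ or all of $\C^9$. Hence it suffices to exhibit one $\alpha_0\in\R_+^9$ at which \emph{every} critical point of $F_{\alpha_0}$ on and near $\T^2$ is non-degenerate, carries an isolated extremum, and has a band-value attained by a single band. By the dichotomy this forces all three failure loci to be proper algebraic subsets, and their union is the algebraic exceptional set asserted by the theorem.

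The substantive step is the rigorous verification at $\alpha_0$, and this is where the exact count enters. The critical points are the solutions in $(\C^\times)^2\times\C$ of the zero-dimensional system
$$
\Phi_\alpha(z,\lambda)=0,\qquad \partial_{z_1}\Phi_\alpha(z,\lambda)=0,\qquad \partial_{z_2}\Phi_\alpha(z,\lambda)=0,
$$
which, where $\partial_\lambda\Phi\ne 0$, is equivalent to $\nabla_z\lambda=0$. I would determine the exact number $N$ of these solutions for generic $\alpha$ --- by the Bernstein--Kushnirenko--Khovanskii mixed-volume bound for the Newton polytopes of these Laurent polynomials, or by an elimination/Gr\"obner computation over $\Q(\alpha)$ --- and then solve the system at a chosen rational $\alpha_0$. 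If the solver returns exactly $N$ distinct points, all with $H\ne 0$, with pairwise distinct $\lambda$-values and none at a touching point, then no two critical points have collided and no degeneracy has occurred, so $\alpha_0$ lies outside all three failure loci. Exact arithmetic (or a certified solver via interval or alpha-theory methods) turns this check into a proof; a purely numerical run at random $\alpha_0$ already yields the ``with probability one'' conclusion of Corollary~\ref{C:test}, which is the first of the paper's three approaches.

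The main obstacle I anticipate is controlling \emph{spurious components} of the incidence set that are irrelevant to the torus. Clearing the denominator $P(z)=z_1z_2$ and passing to $\widetilde A$ can introduce critical points on the coordinate hyperplanes $z_1=0$ or $z_2=0$, or branches escaping to infinity, which may project \emph{dominantly} onto $\C^9_\alpha$ even though they never meet a neighborhood $U$ of $\T^2$. Such a component of the critical set can have codimension strictly less than $n+2$, so the naive dimension count behind Conjecture~\ref{Con:gener} breaks down --- exactly the phenomenon exhibited by the counterexample of~\cite{FK}. The count $N$ must therefore be taken over the torus-relevant stratum only, and the elimination must saturate by $z_1z_2$ and by $\partial_\lambda\Phi$ so that these hidden components are removed before the verification at $\alpha_0$. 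Making the count both exact and tight --- so that finding $N$ non-degenerate critical points at one $\alpha_0$ genuinely certifies genericity --- is the hard, computationally heavy heart of the argument.
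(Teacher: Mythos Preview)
Your approach is essentially the paper's third (rigorous) proof: compute the Bernstein--Kushnirenko mixed volume of the Newton polytopes of the critical-point system to obtain the generic count $N$ (the paper finds $N=32$ via Lemma~\ref{L:MV}), then verify symbolically at a specific rational $\alpha_0$ that exactly $N$ nondegenerate critical points occur. You also correctly flag the spurious components on $z_1z_2=0$ as the main obstacle, which matches the paper's finding that the nine-dimensional components of $P$ all lie on $z_1=0$ or $z_2=0$ and hence miss $DC$.

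There is, however, a logical slip in your reduction step. You write that exhibiting one good $\alpha_0$ ``by the dichotomy forces all three failure loci to be proper algebraic subsets.'' Theorem~\ref{T:dichotomy} does not give this: in the unfavorable alternative $DV$ merely \emph{contains the complement} of a proper algebraic set, which still leaves a measure-zero set of possible good parameters, so a single deterministically chosen $\alpha_0\notin DV$ cannot by itself exclude that alternative---this is exactly why Corollary~\ref{C:test} is only an ``almost sure'' statement. The paper's actual bridge from ``one good point'' to ``genericity'' is not the dichotomy but a \emph{properness} argument: because $\alpha_0$ realizes the maximal Bernstein count $N=32$, it is a regular value of the projection $\pi\colon CP\to\C^9$, so $\pi$ is a $32$-sheeted cover, hence proper, over a classical neighborhood $U$ of $\alpha_0$; then $DV=\pi(DC)$ is closed in $U$, and since $\alpha_0\notin DV$ the complement of $DV$ contains a nonempty classical open set, which is automatically Zariski-dense in $\C^9$. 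Your closing sentence (``making the count both exact and tight \ldots\ genuinely certifies genericity'') shows you sense that the tight count is doing real work, but the mechanism should be stated as this properness/regular-value argument rather than attributed to the dichotomy.
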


\section{Proof of Theorem \ref{T:example}}

For the reasons we have described before, we provide three approaches to establishing Theorem~\ref{T:example}, each of which has its advantages and disadvantages and thus could be useful in different situations.
The first uses the paradigm of numerical algebraic geometry~\cite{SW05}.
While it yields a detailed understanding of the irreducible component structure of the set
$$DC\subset\C^9_\alpha\times(\C\backslash\{0\})^2\times\C_\lambda
$$
of degenerate critical points of the dispersion relation, it is not a traditional proof as the results of the numerical
computation are not certified in the sense of~\cite{HS12}. Besides, this is an extremely laborious computation, clearly unfeasible for more general graphs and higher-dimensional periodicity.
The second argument uses Theorem~\ref{T:dichotomy}, and it is probabilistic in the sense of Corollary~\ref{C:test}.
We give a third argument which is a proof in the traditional sense
and is based on intricate algebraic geometry results.

\subsection{Analytic reduction}\label{SS:reduction}
We spare the reader from the straightforward explicit computation of the symbol $L_\alpha(z)$, where the nine-dimensional vector $\alpha$ contains the weight parameter (``metric'') of the graph. It is a $2\times 2$ matrix Laurent polynomial in $z$.

The dispersion relation is
 \begin{equation}
  \label{E:1st}
    \lambda^2- \lambda \textrm{Tr}L_\alpha(z) + \det L_\alpha(z)=0.
 \end{equation}
The critical point condition gives
 \begin{equation}
   \label{E:2nd}
   \lambda \frac{\partial(\textrm{Tr}A)}{\partial z_j} - \frac{\partial(\det A)}{\partial z_j}=0
    \mbox{ for }j=1,2.
 \end{equation}
Finally, one can calculate that %
the vanishing of the Hessian determinant implies that
 \begin{multline}\label{E:3rd}
   \quad \left(\lambda \frac{\partial^2 (\textrm{Tr}A)}{\partial z_1^2}
     -\frac{\partial^2(\det A)}{\partial z_1^2}\right)\cdot
    \left(\lambda \frac{\partial^2 (\textrm{Tr}A)}{\partial z_2^2}
     -\frac{\partial^2(\det A)}{\partial z_2^2}\right)\\
    -\ \left(\lambda \frac{\partial^2 (\textrm{Tr}A)}{\partial z_1 \partial z_2}
       -\frac{\partial^2(\det A)}{\partial z_1 \partial z_2}\right)^2=0.\qquad
 \end{multline}

Let $g_1,\dotsc,g_4$ be, respectively, the left hand sides of the characteristic equation~\eqref{E:1st}, the two
equations~\eqref{E:2nd} for critical points of the dispersion relation, and the Hessian equation~\eqref{E:3rd}.
These are rational functions in the variables
$\alpha_1, \ldots, \alpha_9, z_1, z_2,\lambda $ with an interesting structure.
They are polynomials of degrees 2, 2, 2, and 4 in $\alpha_1, \ldots, \alpha_9, \lambda$ (homogeneous in $\alpha$) and
Laurent polynomials in $z_1,z_2$---their denominators all have the form $z_1^{n_1} z_2^{n_2}$ for some $n_1, n_2\in\N$.
Thus $g_1,\dotsc,g_4$ are Laurent polynomials that are defined on $\C^9_\alpha\times(\C\backslash\{0\})^2\times\C_\lambda$.
Let $DC$ be the algebraic variety defined by $g_1=g_2=g_3=g_4=0$.
This implies our first lemma.

\begin{lemma}\label{L:(1)}
  The set $DC_\R$ of points on the dispersion relation for the family $L_\alpha$ having degenerate critical points is a
  subset of the set of real points of the algebraic variety $DC$.
\end{lemma}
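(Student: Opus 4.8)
The plan is to prove the containment $DC_\R \subseteq DC(\R)$ by showing that the analytic conditions defining a degenerate critical point of a band function translate, after clearing denominators, into the polynomial equations $g_1 = g_2 = g_3 = g_4 = 0$. The starting point is that the dispersion relation is cut out by $g_1 = \Phi_\alpha(z,\lambda) = \lambda^2 - \lambda\,\mathrm{Tr}\,A + \det A = 0$, the characteristic polynomial of the $2\times 2$ symbol $A = L_\alpha(z)$. At any point of the dispersion relation where $\lambda$ is a \emph{simple} eigenvalue we have $\partial\Phi/\partial\lambda = 2\lambda - \mathrm{Tr}\,A \neq 0$ (indeed $\partial\Phi/\partial\lambda$ vanishes at a root precisely when that root is double), so the implicit function theorem produces a single smooth band function $\lambda(z)$ near such a point. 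This is exactly the regime in which ``critical point'' and ``Hessian'' of a band function are defined.

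First I would differentiate $\Phi_\alpha(z,\lambda(z)) = 0$ once in $z_j$, obtaining $\partial\lambda/\partial z_j = -(\partial\Phi/\partial z_j)/(\partial\Phi/\partial\lambda)$. Since the denominator is nonzero at a simple eigenvalue, the critical-point condition $\nabla_z\lambda = 0$ is equivalent to the vanishing of the numerators $\partial\Phi/\partial z_j = -\lambda\,\partial(\mathrm{Tr}\,A)/\partial z_j + \partial(\det A)/\partial z_j$, that is, to $g_2 = g_3 = 0$. Differentiating a second time and using that the first derivatives already vanish at a critical point collapses the chain rule to $\partial^2\lambda/\partial z_i\partial z_j = -(\partial\Phi/\partial\lambda)^{-1}\,\partial^2\Phi/\partial z_i\partial z_j$. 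Hence the Hessian of $\lambda$ is a nonzero scalar multiple of the $z$-Hessian of $\Phi$, and for a $2\times 2$ matrix its determinant vanishes if and only if $\det(\partial^2\Phi/\partial z_i\partial z_j) = 0$; expanding $\partial^2\Phi/\partial z_i\partial z_j = -\lambda\,\partial^2(\mathrm{Tr}\,A)/\partial z_i\partial z_j + \partial^2(\det A)/\partial z_i\partial z_j$ and noting that the overall sign is squared away in the $2\times 2$ determinant recovers exactly $g_4 = 0$. Thus every point at which a (simple-eigenvalue) band function has a degenerate critical point satisfies all four equations, and since it has $z\in\T^2$ and real $\lambda$ it is a real point of $DC$.

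The step requiring the most care is the double-eigenvalue locus, where $\partial\Phi/\partial\lambda = 0$, the implicit function theorem fails, and the clearing of denominators is not valid; there a band function need not even be differentiable, so ``degenerate critical point'' is not defined in the naive sense. The saving observation is that the lemma only asserts a \emph{containment}, not an equality: we never need the polynomial system to capture these points, and such coincidence points fall under the separate failure of condition (1) of Conjecture~\ref{Con:gener} rather than condition (3). I would therefore phrase the argument so that $DC_\R$ consists of genuine degenerate critical points of smooth branches (forcing a simple eigenvalue), after which the implicit-differentiation computation applies verbatim and the containment follows. It is worth emphasizing that $DC$ may be strictly larger than $DC_\R$ --- it can contain spurious complex points and points on the double-eigenvalue locus --- but this is harmless, since all subsequent dimension and genericity estimates require only the upper bound $DC_\R\subseteq DC(\R)$.
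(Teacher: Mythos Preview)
Your argument is correct and follows essentially the same route as the paper: the authors derive equations~(\ref{E:1st})--(\ref{E:3rd}) by implicit differentiation of $\Phi_\alpha(z,\lambda)=0$ (this is spelled out in Section~\ref{S:main}, equations~(\ref{E:equations})--(\ref{E:system})), define $DC$ as their common zero set, and then simply state that ``this implies our first lemma.'' Your write-up is in fact more careful than the paper's, since you explicitly justify why the implicit function theorem applies (simple eigenvalue $\Leftrightarrow \partial\Phi/\partial\lambda\neq 0$) and you address the double-eigenvalue locus, pointing out that such points are excluded from $DC_\R$ by the very definition of a degenerate critical point of a smooth branch; the paper leaves both issues implicit.
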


To study the variety $DC\subset \C^9\times(\C\backslash\{0\})^2\times\C$, for each $j=1,\dotsc,4$, let $f_j$ be the
numerator of $g_j$.
Then $f_1,\dotsc,f_4$ are ordinary polynomials in $\alpha,z,\lambda$.
Let $P\subset\C^{12}$ be the algebraic variety defined by the vanishing of $f_1,\dotsc,f_4$.
Then $DC=P\cap  (\C^9\times(\C\backslash\{0\})^2\times\C)$, but we may have $DC\neq P$, as $P$ may have components where
$z_1z_2=0$.
Indeed that is the case.

\begin{lemma}\label{L:(2)}
  The dimension of $P$ is nine and the dimension of $DC$ is eight.
\end{lemma}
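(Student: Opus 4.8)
The plan is to exploit the extra $\C^*$-symmetry of the problem together with the fibration of $P$ over the $z$-plane, reducing the two dimension counts to a fiberwise calculation that can be certified symbolically. First I would record the grading: because $L_\alpha$ is linear in $\alpha$ and rescales as $L_{t\alpha}=tL_\alpha$, each numerator $f_j$ is homogeneous under the action $(\alpha,z,\lambda)\mapsto(t\alpha,z,t\lambda)$, of degrees $2,2,2,4$ in the joint $(\alpha,\lambda)$-variables and with no constraint on $z$. Hence both $P$ and $DC$ are cones in the ten $(\alpha,\lambda)$-directions over each fixed $z$, and it is natural to study the projection $\pi_z\colon P\to\C^2_z$. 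For every irreducible component $X$ one has $\dim X=\dim\overline{\pi_z(X)}+d_X$, where $d_X$ is the generic fiber dimension inside $\C^{10}_{(\alpha,\lambda)}$, so the whole problem becomes bookkeeping of fiber dimensions over strata of the $z$-plane.

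Next I would treat $DC$, the part with $z_1z_2\neq 0$. Over a fixed $z$ on the torus the functions $g_1,\dots,g_4$ are honest regular functions (they agree with $f_1,\dots,f_4$ up to the invertible factor $z_1^{n_1}z_2^{n_2}$) and are four homogeneous forms in the ten variables $(\alpha,\lambda)$, so their common zero locus has dimension at least $10-4=6$. To see that it equals $6$ generically it suffices, by upper-semicontinuity of fiber dimension, to exhibit one torus value of $z$ where the fiber has dimension exactly $6$, which I would check by computing the dimension of the ideal $(g_1,g_2,g_3,g_4)$ at a random rational $z$ (equivalently, that the $g_j$ form a regular sequence there). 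Since the generic fiber over $(\C\setminus\{0\})^2$ is then $6$-dimensional over a $2$-dimensional base, every component dominating the torus has dimension $8$; a short check that the fiber never jumps above dimension $7$ over a curve of the torus, nor above $8$ over a point, rules out larger components and yields $\dim DC=8$.

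For $P$ itself the extra dimension comes from the coordinate lines $z_1=0$ and $z_2=0$, which are swept into $P$ when the denominators $z_1^{n_1}z_2^{n_2}$ are cleared. On $\{z_1=0\}$ precisely those $f_j$ divisible by $z_1$ vanish identically, so only the remaining equations impose conditions; I would verify, by inspecting the cleared denominators, that exactly two of $f_1,\dots,f_4$ acquire a factor of $z_1$, so that over the one-dimensional base $\{z_1=0\}$ only two equations survive and, being a regular sequence in $(\alpha,\lambda)$, cut out an $8$-dimensional fiber, producing a component of dimension $8+1=9$; the line $\{z_2=0\}$ is symmetric. Combined with the $DC$ bound, and a check that neither $z=0$ nor any other stratum raises the fiber dimension enough to create a component of dimension $\geq 10$, this gives $\dim P=9$.

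The genuine obstacle is exactly the delicacy this paper repeatedly flags: because of excess-intersection phenomena one cannot assume the four equations cut their expected codimension, and the count can fail by small amounts in either direction (three rather than two numerators divisible by $z_1$ would already force $\dim P\geq 10$, while a nonreduced or non-regular sequence on the torus could push $\dim DC$ up). Thus the decisive steps—that the restricted equations form regular sequences on the torus and on each coordinate line, and that exactly two numerators vanish on each line—are the ones that must be certified, and I would establish them by an explicit Gr\"obner basis and primary decomposition computation, cross-validated against the numerical irreducible decomposition produced by the first approach.
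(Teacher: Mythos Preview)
Your approach is genuinely different from the paper's. The paper offers no traditional proof of this lemma: it runs Bertini's numerical irreducible decomposition on $(f_1,\dots,f_4)$ and reports the output---ten components, seven of dimension eight and three of dimension nine, the latter three all lying in $\{z_1z_2=0\}$---while explicitly flagging that the computation is uncertified. Your fibration over $\C^2_z$, using the joint $(\alpha,\lambda)$-homogeneity to reduce to fiberwise regular-sequence checks, is more structural and in principle certifiable symbolically; it also \emph{explains} why the excess dimension of $P$ should be confined to the coordinate hyperplanes, something the Bertini run merely observes.

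That said, two of your load-bearing steps are not as light as you present them. The claim that exactly two of the $f_j$ are divisible by $z_1$ is a guess pending inspection, and the count matters: as you yourself note, three would already force $\dim P\geq 10$. (The graph in Fig.~\ref{F:sample} has no $z_1\leftrightarrow z_2$ symmetry, so the two coordinate lines need separate treatment, and Bertini's report of \emph{three} nine-dimensional components rather than two hints that at least one line carries a reducible locus.) More seriously, ruling out nine-dimensional components of $DC$ over the torus is not a ``short check'': upper-semicontinuity tells you only that the jump locus $\{z:\dim(\text{fiber})\geq 7\}$ is closed in $(\C\setminus\{0\})^2$; you still have to compute that locus and bound the fiber dimension over every curve and point in it, which is an elimination problem of weight comparable to the original question. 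None of this is fatal---your closing paragraph correctly names precisely these as the steps needing certification---but the honest summary is that you are trading one large uncertified numerical computation for several smaller symbolic ones whose aggregate effort is not obviously smaller.
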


In Subsection~\ref{SS:Bertini} we describe the decomposition of $P$ into irreducible components, which proves
Lemma~\ref{L:(2)}.

The dimension of the image of an algebraic variety $X$ under a map is contained in an algebraic variety whose dimension is
at most that of $X$~\cite[Sect.~I.6.3]{Shaf}.
Combined with Lemma~\ref{L:(2)}, this implies our third lemma:
\begin{lemma}\label{L:(3)}
  The image $DV$ of $DC$ under the projection to $\C^9_\alpha$ has dimension at most eight.
\end{lemma}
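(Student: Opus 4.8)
The plan is to deduce Lemma~\ref{L:(3)} directly from Lemma~\ref{L:(2)} together with the cited dimension bound on images of algebraic varieties~\cite[Sect.~I.6.3]{Shaf}. The statement Lemma~\ref{L:(3)} concerns is purely a consequence of already-established facts, so the proof should be short: I would simply apply the general principle that projecting an algebraic variety can only decrease (or preserve) dimension.

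First I would set up the projection explicitly. Write $\pi\colon \C^9_\alpha\times(\C\backslash\{0\})^2\times\C_\lambda \to \C^9_\alpha$ for the coordinate projection $(\alpha,z,\lambda)\mapsto\alpha$, and recall that by definition $DV=\pi(DC)$. The image $\pi(DC)$ need not itself be Zariski closed, but its closure $\overline{\pi(DC)}$ is an algebraic variety, and by~\cite[Sect.~I.6.3]{Shaf} one has $\dim\overline{\pi(DC)}\le\dim DC$. Since Lemma~\ref{L:(2)} gives $\dim DC=8$, I conclude $\dim DV\le 8$, which is exactly the assertion.

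The one genuinely substantive input is Lemma~\ref{L:(2)}, asserting $\dim DC=8$; but that is established independently via the component decomposition in Subsection~\ref{SS:Bertini} and is explicitly available to invoke here. The dimension inequality $\dim\overline{\pi(X)}\le\dim X$ for a morphism of varieties is the standard fact (the image of an irreducible variety under a morphism is irreducible and of dimension at most that of the source, since the fibres have nonnegative dimension and the dimension adds up along a dominant map to its closure). No subtlety about $DC$ being quasi-affine (the open condition $z_1z_2\ne 0$) interferes, because removing a closed subset cannot raise the dimension and the bound only concerns an upper estimate.

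Accordingly there is no real obstacle: the proof is a one-line invocation of the general theorem once Lemma~\ref{L:(2)} is in hand. The only point to be careful about is the logical ordering within the paper, namely that Lemma~\ref{L:(2)} (whose proof is deferred to Subsection~\ref{SS:Bertini}) is being used before its own proof appears; since the excerpt already states Lemma~\ref{L:(2)} and announces that its justification is given later, it is legitimate to treat it as known. I would therefore write the proof as: by Lemma~\ref{L:(2)}, $\dim DC=8$, and by~\cite[Sect.~I.6.3]{Shaf} the dimension of the image of $DC$ under any morphism, in particular under the projection to $\C^9_\alpha$, is at most $\dim DC=8$, whence $\dim DV\le 8$.
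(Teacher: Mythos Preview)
Your proposal is correct and follows precisely the paper's own argument: the paper deduces Lemma~\ref{L:(3)} in one sentence by combining Lemma~\ref{L:(2)} with the dimension bound for images of varieties from~\cite[Sect.~I.6.3]{Shaf}. Your additional remarks about the Zariski closure and the quasi-affine nature of $DC$ are accurate but not needed beyond what the paper already says.
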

This completes the proof of Theorem~\ref{T:example}.
$\square$

\subsection{Numerical algebraic geometry verification of Lemma~\ref{L:(2)}.}\label{SS:Bertini}

We describe computations that establish Lemma~\ref{L:(2)} and therefore Theorem~\ref{T:example}.
They, as well as the derivations of Subsection~\ref{SS:reduction}, are archived on the website that accompanies this
article\footnote{{www.math.tamu.edu/\~{}sottile/research/stories/dispersion/}}.

We used the software Bertini~\cite{Bates_Bertini}, which is freely available and implements many
algorithms in numerical algebraic geometry~\cite{SW05}.
We start with the polynomials $f_1,\dotsc,f_4$, which are the numerators of~\eqref{E:1st}, \eqref{E:2nd}, and
\eqref{E:3rd} and which define the algebraic variety $P\subset \C^{12}$.
Bertini used the algorithms of regeneration~\cite{regeneration}, numerical irreducible decomposition~\cite{NID}, and
deflation~\cite{deflation} to study $P$, determining its decomposition into irreducible components, as well as
the dimension and degree of each component, and 
the multiplicity of $P$ along that component.
We sketch the consequence of that computation.

First, while $P$ is defined by four equations in $\C^{12}$, one finds that it has ten irreducible components of dimensions eight
and nine.
Specifically, it has seven components of dimension eight and three of dimension nine.
On all three components of dimension nine one of $z_1$ or $z_2$ vanishes.%
These components do not lie in $DC$ as $z_1z_2\neq 0$ on $DC$.
The consequence is that $DC$ has dimension eight, and therefore implies Lemma~\ref{L:(2)} and thus Theorem~\ref{T:example}.


This computation does not constitute a traditional proof, as Bertini does not certify its output.
We give an alternative verification using the dichotomy of Theorem~\ref{T:dichotomy} that relies on a symbolic
computation, and then a proof that uses geometric combinatorics.

\subsection{Symbolic computation verification}\label{SS:symbolic}
We describe how one can generate an ``almost surely'' verification of Theorem~\ref{T:example}, using
symbolic computation, which is exact and certified.
For this, we select a ``random'' (e.g., using a random numbers generator) point $\alpha\in \R_+^9$ and check that $\alpha\not\in DV$ and hence that
$\alpha\not\in DV_\R$ by showing that $DC\cap(\{\alpha\}\times(\C\backslash\{0\})^2\times\C)$ is empty. An application of Theorem \ref{T:dichotomy} then shows that Theorem~\ref{T:example} is almost surely valid,
in the sense of Corollary~\ref{C:test}. In presence of a true random numbers generator, this would prove the statement ``almost surely.''

We have done this for our graph as follows. First of all, due to a homogeneity present in the equation, instead of approximating the uniform distribution of $\alpha_j$ on a dense grid in a segment, say $[0,1]$, the equivalent option is to chose $\alpha$s uniformly among a large segment of natural numbers\footnote{Another argument is that the integers are Zariski-dense in the set of complex numbers.}. We thus picked each $\alpha_j$ randomly among the integers $\{1, 2, \dots, 50\}$. The random choice was $[31, 1, 13, 19, 36, 4, 27, 3, 7]$.

\begin{example}\label{L:empty}
  Let $\alpha=(31, 1, 13, 19, 36, 4, 27, 3, 7)$.
  Then there are no points $(z,\lambda)\in (\C\backslash\{0\})^2\times\C$ such that
  $g_1,\dotsc,g_4$ all vanish at $(\alpha,z,\lambda)$.
\end{example}

\begin{proof}
  A complex number $z$ is non-zero $(z\in\C\backslash\{0\})$ if and only if there exists $u\in\C$ with $zu=1$.
  For $\alpha=(31, 1, 13, 19, 36, 4, 27, 3, 7)$, a  Gr\"obner basis computation in both Maple and Singular~\cite{Singular} shows that
  the ideal $I$ in $\Q[\lambda,z_1,z_2,u_1,u_2]$ generated by $f_1,\dotsc,f_4, z_1u_1-1, z_2u_2-1$ contains $1$.
  But then $I$ defines the empty set, by Hilbert's Nullstellensatz.
\end{proof}

This (very fast) computation was performed for 100 different random choices of a parameter point $\alpha$, each time providing the same result.
One can claim with high confidence that the conjecture holds for this graph.

Fig.~\ref{Fig:Disp} shows another example: the dispersion relation for $L_{(1,2,3,4,5,6,7,8,1)}$. The horizontal plane is at $\lambda=0$ and the domain is $k\in[-\frac{\pi}{2}, \frac{3}{2}\pi]$.

\begin{figure}[ht!]
  \centering
  \includegraphics{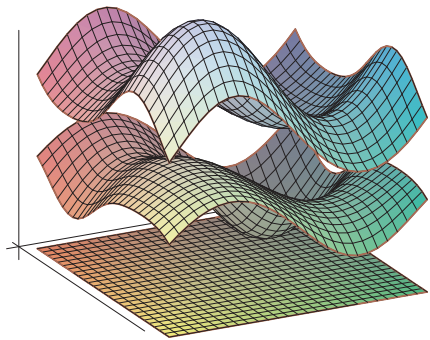}
  \caption{Dispersion relation for $L_{(1,2,3,4,5,6,7,8,1)}$.}
  \label{Fig:Disp}
\end{figure}

\subsection{Combinatorial algebraic geometry proof}\label{S:ActualProof}
We now present a valid proof of Theorem~\ref{T:example}.
Since $A_\alpha(z)$ is a $2\times 2$ matrix, its characteristic polynomial~\eqref{E:1st}
is quadratic in $\lambda$ with leading coefficient 1, and thus the variety it defines in
$\C^9\times(\C\backslash\{0\})^2\times\C$ (the dispersion relation) has the property that its
projection to the $(\alpha,z)$ parameters $\C^9\times(\C\backslash\{0\})^2$ is a proper map
with each fiber consisting of either two points, or a single point of multiplicity 2.

Consider now the set $CP$ which is defined by the characteristic equation~\eqref{E:1st} and
the two critical point equations~\eqref{E:2nd}.
Then $CP$ consists of points $(\alpha,z,\lambda)$ such that $\lambda$ is a critical point of the dispersion relation.
The critical points $(z,\lambda)\in(\C\backslash\{0\})\times\C$ for any given $\alpha\in\C^9$ are the set of solutions to
three equations in the three variables $z_1,z_2,\lambda$.
A celebrated result of Bernstein~\cite{Bernstein} gives a strict upper bound
for the number of isolated solutions, counted with multiplicity.

Let us explain this.
The exponent of a monomial $z_1^{a_1}z_2^{a_2}\lambda^{a_3}$ is an integer vector $(a_1,a_2,a_3)\in\Z^3$.
The exponents that occur in the nonzero terms in a Laurent polynomial $f$ form its support.
Their convex hull $N(f)$ is the Newton polytope of $f$.
The bound in Bernstein's theorem is given by Minkowski's mixed volume of the Newton polytopes of the equations.

The polynomials $f_1,f_2,f_3$ which define $CP$ have the following Newton polytopes.
 \begin{equation}\label{Eq:NewtonPolytopes}
  \raisebox{-40pt}{\includegraphics[height=88pt]{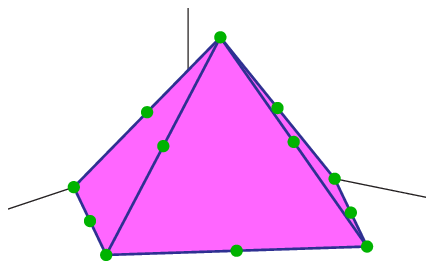}\quad
                   \includegraphics[height=70pt]{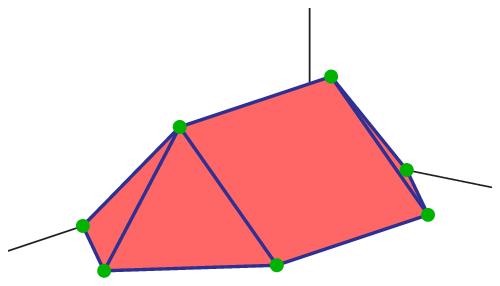}\quad
                   \includegraphics[height=70pt]{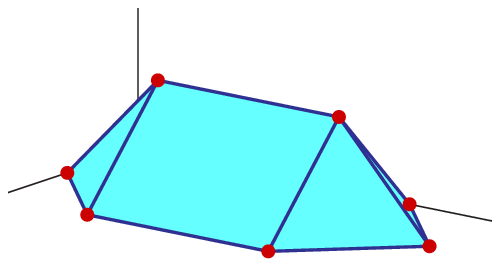}}
 \end{equation}
For the characteristic equation $f_1$, this is the pyramid with vertex $(2,2,2)$ and base the
square with vertices $(0,2,0)$, $(2,0,0)$, $(4,2,0)$, and $(2,4,0)$.
The side length of each edge in the base is $2\sqrt{2}$ and its height is 2, so that its volume is $32/6$.
The other two are reflections of each other.
The first has base the hexagon with vertices
\[
  (0,1,0),\ (1,0,0),\ (3,0,0),\ (4,1,0),\ (3,2,0),\ (1,2,0),
\]
and its other vertices are $(1,1,1)$ and $(3,1,1)$.
If all polytopes are translated so that the centers of their bases are at the origin
$(0,0,0)$, then the second two lie inside the pyramid.

\begin{lemma}\label{L:MV}
  The mixed volume of the three polytopes~\eqref{Eq:NewtonPolytopes} is $32$.
\end{lemma}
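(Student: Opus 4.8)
The plan is to compute the mixed volume $MV(N(f_1), N(f_2), N(f_3))$ of the three Newton polytopes directly, exploiting the explicit vertex descriptions already given. Recall that for three polytopes $P_1, P_2, P_3 \subset \R^3$ the mixed volume is the coefficient of $t_1 t_2 t_3$ (up to the factor $3! = 6$) in the expansion of $\textrm{vol}(t_1 P_1 + t_2 P_2 + t_3 P_3)$, and equivalently it is given by the inclusion–exclusion formula
\begin{equation}\label{E:MVformula}
MV(P_1,P_2,P_3) = \sum_{\emptyset\neq S\subseteq\{1,2,3\}} (-1)^{3-|S|}\,\textrm{vol}\Bigl(\sum_{i\in S} P_i\Bigr),
\end{equation}
where the sums are Minkowski sums and $\textrm{vol}$ is the ordinary Euclidean volume (scaled so the standard unit cube has volume $1$). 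So the first step is simply to enumerate the seven Minkowski sums appearing in~\eqref{E:MVformula} and compute each of their volumes.

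To make this tractable I would first normalize the three polytopes by translating each so that the center of its base square/hexagon sits at the origin, as the excerpt already suggests; this does not change any mixed volume. A key structural observation to exploit is that all three polytopes are pyramids of height $2$ (respectively $1$) over planar bases lying in the plane $a_3 = 0$, with apexes on the line $a_3 > 0$. After centering, $N(f_2)$ and $N(f_3)$ lie inside $N(f_1)$; this containment should let me identify many of the Minkowski sums explicitly and reduce each volume computation to computing the volume of an explicit polytope, which I would do by decomposing into the planar-base contribution plus the ``pyramidal'' contribution coming from the apex heights. Because the $\lambda$-degrees are low (the apex of $N(f_1)$ has height $2$ and those of $N(f_2), N(f_3)$ have height $1$), the vertical structure is simple and the computation reduces largely to areas of the planar base polygons and their Minkowski sums in the plane $\Z^2$.

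Concretely, the main work is: (i) compute the six planar base figures obtained as Minkowski sums of the square base of $N(f_1)$ (area-$8$, the square with vertices $(\pm 2,0),(0,\pm 2)$) with the two hexagonal bases of $N(f_2),N(f_3)$ in all combinations, using the fact that Minkowski sums of convex polygons are obtained by concatenating edge vectors in angular order; (ii) track the apex heights to build the full three-dimensional sums; and (iii) plug the resulting volumes into~\eqref{E:MVformula}. As a sanity check I expect the answer $32$ to arise cleanly, and I would cross-validate it against the already-reported value $32/6$ for $\textrm{vol}(N(f_1))$ via the normalization $MV(P,P,P) = 3!\,\textrm{vol}(P)$, which gives $MV(N(f_1),N(f_1),N(f_1)) = 32$; the fact that this trivial mixed volume already equals $32$ is a strong hint that the containments force the three genuine mixed volumes to agree, and I would aim to prove equality by a monotonicity argument rather than by computing all seven volumes from scratch.

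The hard part will be step (i): computing the seven Minkowski-sum volumes without error, since mixed-volume computations are notoriously prone to bookkeeping mistakes in the edge-concatenation and in separating the planar area contributions from the vertical pyramidal contributions. The cleanest route may be to avoid~\eqref{E:MVformula} entirely and instead use the mixed-volume monotonicity property $MV(P_1,P_2,P_3) \le MV(Q_1,Q_2,Q_3)$ when $P_i \subseteq Q_i$: since (after centering) $N(f_2), N(f_3) \subseteq N(f_1)$, one gets $MV(N(f_1),N(f_2),N(f_3)) \le MV(N(f_1),N(f_1),N(f_1)) = 32$, and a matching lower bound would complete the proof. Establishing that reverse inequality — i.e.\ showing the two smaller polytopes are ``mixed-volume-full'' inside $N(f_1)$ despite being strictly smaller — is the genuine obstacle, and I would expect to resolve it either by exhibiting explicit common inner normals along which the relevant mixed faces have full-dimensional mixed area, or, failing a slick argument, by carrying out the explicit multilinear expansion as a certified symbolic computation.
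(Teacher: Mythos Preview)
Your high-level strategy is sound: monotonicity of mixed volume together with the containments $N(f_2),N(f_3)\subseteq N(f_1)$ (after centering) immediately gives the upper bound $MV\le MV(N(f_1),N(f_1),N(f_1))=3!\cdot\mathrm{vol}(N(f_1))=32$, and you correctly isolate the matching lower bound as the only real content. Your fallback of computing all seven Minkowski-sum volumes in the inclusion--exclusion formula would certainly succeed, so the plan is viable.

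The paper, however, bypasses both the brute-force expansion and any ad hoc lower-bound argument by invoking a criterion of Rojas (with the formulation in Bihan--Soprunov) that gives a sufficient condition for the \emph{equality} $MV(P,Q,R)=3!\cdot\mathrm{vol}(P\cup Q\cup R)$ when the union is convex (here equal to the pyramid $P=N(f_1)$). The condition is purely combinatorial on the face lattice of $P$: every vertex of $P$ is a vertex of at least one of $P,Q,R$; every edge of $P$ carries an edge of at least two of them; and every facet of $P$ carries a facet of all three. These incidence checks are immediate from the explicit vertex lists, and $MV=32$ follows at once. This is precisely the ``slick argument'' you were hoping for but did not name; your phrase about ``common inner normals along which the relevant mixed faces have full-dimensional mixed area'' is gesturing toward the same mechanism, since Rojas's criterion is ultimately a statement about the supporting faces in each outer normal direction being large enough.

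One small correction to your description: $N(f_2)$ and $N(f_3)$ are not pyramids. Each has a hexagonal base and \emph{two} upper vertices (for $N(f_2)$ these are $(1,1,1)$ and $(3,1,1)$), so they are wedge-shaped over the hexagon. This does not affect the monotonicity step, but it would matter if you attempted to compute their volumes, or the volumes of their Minkowski sums, via a pyramid formula.
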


\begin{proof}
 This is a consequence of a result of Rojas~\cite[Cor.~9]{Rojas}, which is explained in~\cite[Cor.~3.7]{BiSo}.
 Let the three translated polytopes be $P$, $Q$, and $R$, with $P$ being the pyramid.
 Then $P=P\cup Q\cup R$.

 Observe that at least one of the three polytopes ($P$) meets every vertex of $P$.
 Also, for every edge $e$ of $P$, at least two have  an edge lying along $e$.
 Finally, for every facet $F$ of $P$, all three polytopes have a facet lying along $F$.
 A consequence of~\cite[Cor.~9]{Rojas} (an explanation in \cite{BiSo} is useful) is that the mixed volume of $P,Q,R$ is $3!=6$ times
 the volume of $P$, which is 32.
\end{proof}

\begin{proof}[Proof of Theorem~\ref{T:example}]
  For the point $\alpha=(1,2,3,4,5,6,7,8,1)$, we use Maple and Singular to show that there
  are 32 nondegenerate critical points of the dispersion relation.
  By Lemma~\ref{L:MV}, this is the maximal number of critical points, and so we conclude
  that $\alpha$ is a regular value of the projection map $\pi\colon CP\to\C^9$.
  Furthermore, there is a neighborhood $U$ of $\alpha$ such that over it the map $\pi$ is a 32-sheeted cover and therefore is
  proper near $\alpha$.

  The set $DC$ of degenerate critical points is a closed subset of $CP$ and thus its projection to $\C^9$ is proper near
  $\alpha$.
  Thus its image $DV$ is closed in a neighborhood of $\alpha$.
  Since $\alpha\not\in DV$, this implies that the complement of $DV$ in $\C^9$ contains a neighborhood of
  $\alpha$.
  But any nonempty classical open subset of $\C^9$ is Zariski-dense, and therefore we conclude that
  the complement of $DV$ in $\C^9$ contains a nonempty Zariski open set, which completes the proof.
\end{proof}

An interesting outcome from this version of the proof is the following result:
\begin{theorem}\label{T:32}
Under the conditions of Theorem \ref{T:example}, the statement of Conjecture \ref{C:upscaling} holds true. In other words, increasing the number of parameters in (e.g., adding more edges to) the two-atomic structure shown in Fig. \ref{F:sample} does not change the conclusion on genericity of Theorem \ref{T:example}.
\end{theorem}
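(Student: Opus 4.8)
The plan is to realize any enlargement of the structure in Fig.~\ref{F:sample} as a supergraph $\Gamma'\supseteq\Gamma$ carrying a weight vector $\beta=(\alpha,\gamma)\in\C^N$, $N>9$, whose first nine coordinates $\alpha$ are the weights of the original edges and whose remaining coordinates $\gamma\in\C^{N-9}$ weight the adjoined edges. Setting $\gamma=0$ switches the new edges off and returns exactly the operator $L_\alpha$ of Theorem~\ref{T:example}. Forming the degenerate critical locus $DC'\subset\C^N\times(\C\backslash\{0\})^2\times\C$ and its image $DV'=\pi(DC')$ verbatim as in Section~\ref{SS:reduction}, the goal is to prove that $DV'$ is a proper algebraic subset of $\C^N$. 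The single structural input I would start from is the identity $DC'\cap\{\gamma=0\}=DC$, since at $\gamma=0$ the supergraph operator \emph{is} $L_\alpha$; by Lemma~\ref{L:(2)} this slice has dimension $8$.

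The primary route is a dimension count by specialization, directly parallel to Lemmas~\ref{L:(2)} and~\ref{L:(3)}. Cutting $DC'$ by the $N-9$ hyperplanes $\gamma=0$ and applying Krull's bound on the codimension of each intersection component, every component of $DC'$ that meets the slice $\{\gamma=0\}$ inside $(\C\backslash\{0\})^2$ has dimension at most $8+(N-9)=N-1$. Were this to account for \emph{all} components of $DC'$, we would conclude $\dim DC'\le N-1$, and then, since the image of a variety has dimension at most that of its source \cite[Sect.~I.6.3]{Shaf}, that $\dim DV'\le N-1<N$, so $DV'$ is proper and genericity holds for $\Gamma'$.

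The hard part is precisely the clause ``all components.'' Krull's inequality governs only those components of $DC'$ that actually meet the slice within $(\C\backslash\{0\})^2$; the genuine obstacle is a spurious component that dominates $\C^N$ yet avoids the slice because its critical points run off to the toric boundary $z_1z_2=0$ (equivalently, to infinity) as $\gamma\to0$. To neutralize this I would exploit that Conjecture~\ref{C:upscaling}, like Theorem~\ref{T:dichotomy}, concerns critical points on and near the torus $\T^2=\{|z_j|=1\}$. Because $\T^2$ is compact and disjoint from $\{z_1z_2=0\}$, near-torus critical points cannot escape; and the combinatorial proof of Theorem~\ref{T:example} supplies exactly the needed properness, for at $\alpha_0=(1,2,3,4,5,6,7,8,1)$ the nine-edge graph already attains the full Bernstein count of $32$ critical points (Lemma~\ref{L:MV}), all lying in $(\C\backslash\{0\})^2$ and all nondegenerate, so none sits on the boundary. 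Consequently, for generic $\alpha$ and small $\gamma$ the near-torus critical points of $L_{(\alpha,\gamma)}$ are small perturbations of the $32$ critical points of $L_\alpha$: no new members enter a neighborhood of $\T^2$, and nondegeneracy persists as an open condition. The escaping components therefore carry no near-torus degenerate critical points and are irrelevant to the count.

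Finally I would package the conclusion through the dichotomy. Theorem~\ref{T:dichotomy}, applied to the $N$-parameter family, leaves only the alternatives that $DV'$ is proper or that it contains a dense Zariski-open set; the Euclidean-open region just described (generic $\alpha$, small $\gamma$) is free of degenerate torus critical points, which forbids the second alternative and forces $DV'$ to be a proper subvariety. The same monotonicity, read in the opposite direction, yields the companion statement announced in the Introduction: genericity is preserved under adding edges, so the substructures for which it fails form a downward-closed family in the edge-inclusion order, pinned down by its finitely many maximal elements; a finite symbolic search then locates these, and each turns out to degenerate for a trivial, forced band-touching reason.
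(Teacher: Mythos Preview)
Your argument is correct and in fact more thorough than the paper's. The paper disposes of Theorem~\ref{T:32} in a single sentence: ``the crucial mixed-volume computation of Lemma~\ref{L:MV} does not react to increasing the number of parameters $\alpha$.'' The point is that the Newton polytopes~\eqref{Eq:NewtonPolytopes} live in $(z_1,z_2,\lambda)$-space and do not see the $\alpha$-variables at all; hence the Bernstein bound remains $32$, the extended parameter point $(\alpha_0,0)$ still exhibits exactly $32$ nondegenerate critical points, and the properness step in the proof of Theorem~\ref{T:example} transfers verbatim to the enlarged parameter space $\C^N$.

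You take a genuinely different route. Rather than re-invoke the Bernstein bound for the enlarged system, you isolate the real obstruction---irreducible components of $DC'$ whose critical points run off to $z_1z_2=0$ as $\gamma\to0$ and hence evade the Krull slicing---and neutralize it by working in a compact neighborhood of $\T^2$, where the $32$ nondegenerate critical points at $(\alpha_0,0)$ persist under perturbation and no new ones can enter. The dichotomy of Theorem~\ref{T:dichotomy} then finishes. What your approach buys is robustness: it does not require the enlarged system to have the \emph{same} Newton polytopes, which could fail if one literally adjoins longer-range edges that introduce new monomials and raise the mixed volume above $32$; in that regime the paper's one-liner no longer yields properness of $\pi\colon CP'\to\C^N$ at $(\alpha_0,0)$, while your torus-neighborhood argument survives intact. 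What the paper's approach buys is brevity: when the supports are unchanged (as for vertex potentials, or for the downward-closure application in Section~\ref{S:subgraphs}, where every subgraph already sits inside $\Gamma$), the entire machinery of Theorem~\ref{T:example} simply reruns with $(\alpha_0,0)$ in place of $\alpha_0$, and no separate compactness argument is needed. One small imprecision: your ``generic $\alpha$, small $\gamma$'' should read ``$\alpha$ near $\alpha_0$, small $\gamma$''---you need a Euclidean neighborhood of the specific point $(\alpha_0,0)$, not of a generic slice.
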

Indeed, the crucial mixed-volume computation of Lemma \ref{L:MV} does not react to increasing the number of parameters $\alpha$.

\section{Degenerate  subgraphs}\label{S:subgraphs}

Lemma~\ref{L:empty} and Corollary~\ref{C:test} provide an efficient method to study Conjecture~\ref{Con:gener} on
sufficiently simple discrete periodic graphs.
We illustrate this on a case study involving all $2^9$ subgraphs of the graph of Fig.~\ref{F:sample}, corresponding to
choosing a subset $S$ of the nine edges.

Let $f_1,\dotsc,f_4\in\Q[\alpha,\lambda,z_1,z_2]$  be the polynomials that define the variety $P$ following
Lemma~\ref{L:(2)}.
Given a subset $S$ of the nine edges, let $I_S\subset\Q[\alpha,\lambda,z_1,z_2,u_1,u_2]$ be the ideal generated by
$z_1u_1-1$, $z_2u_2-1$, and the polynomials obtained from  $f_1,\dotsc,f_4$ by setting all parameters $\alpha_j$ equal to
zero for $j\not\in S$.
Then for parameters $\alpha_S=(\alpha_i\mid i\in S)$, $I_S$ vanishes on the set $DC$ of degenerate critical points on the
dispersion relation for the graph $\Gamma_S$ corresponding to $S$ with parameters $\alpha_S$.

We have a Maple script that, for each subset $S$, evaluates the ideal $I_S$ at ten random instances of the parameters
$\alpha_S$.
If, for each of these instances of the parameters $\alpha_S$ it finds that $1\not\in I_S$ (so that the corresponding
dispersion relation has a degenerate critical point), then it adds $S$ to a set $\DSG$ of degenerate subgraphs.
This set $\DSG$ contains 87 subsets $S$.
This set, according to Theorem \ref{T:32} has the structure of a simplicial complex.
That is, if $S\in\DSG$ and $T\subset S$, then $T\in\DSG$.
There are eleven maximal subsets (simplices) $S$ in $\DSG$, corresponding to eleven maximal subgraphs of the graph in
Fig.~\ref{F:sample} which always have degenerate dispersion relations.
We display them in Fig.~\ref{F:DSG}.
\begin{figure}[htb]
\centering
\includegraphics[height=105pt]{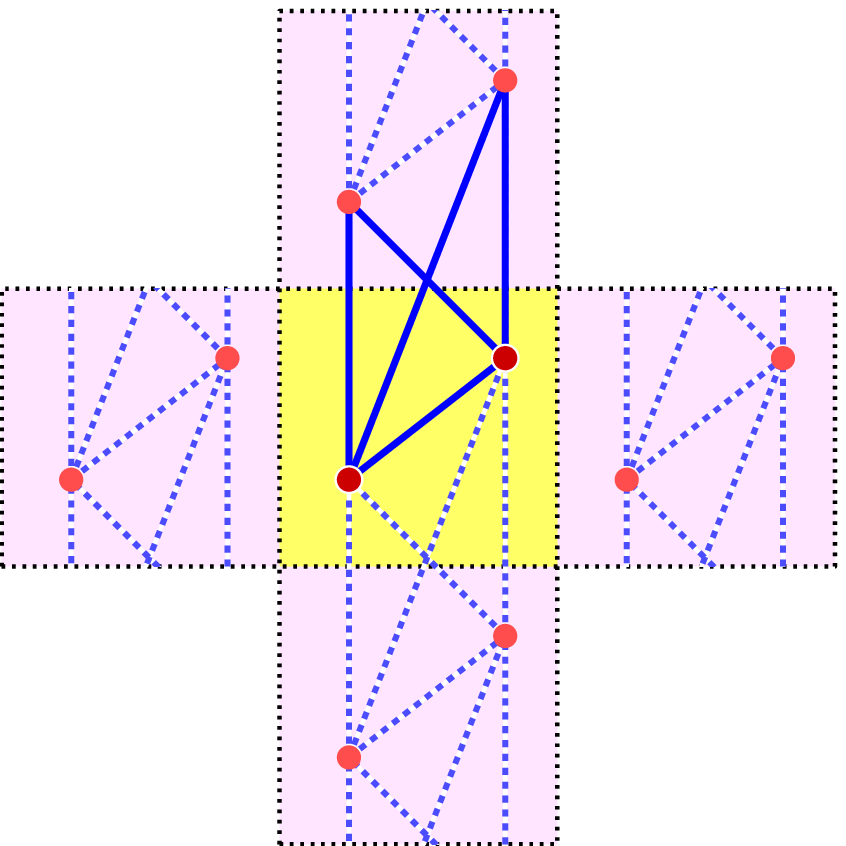}\quad%
\includegraphics[height=105pt]{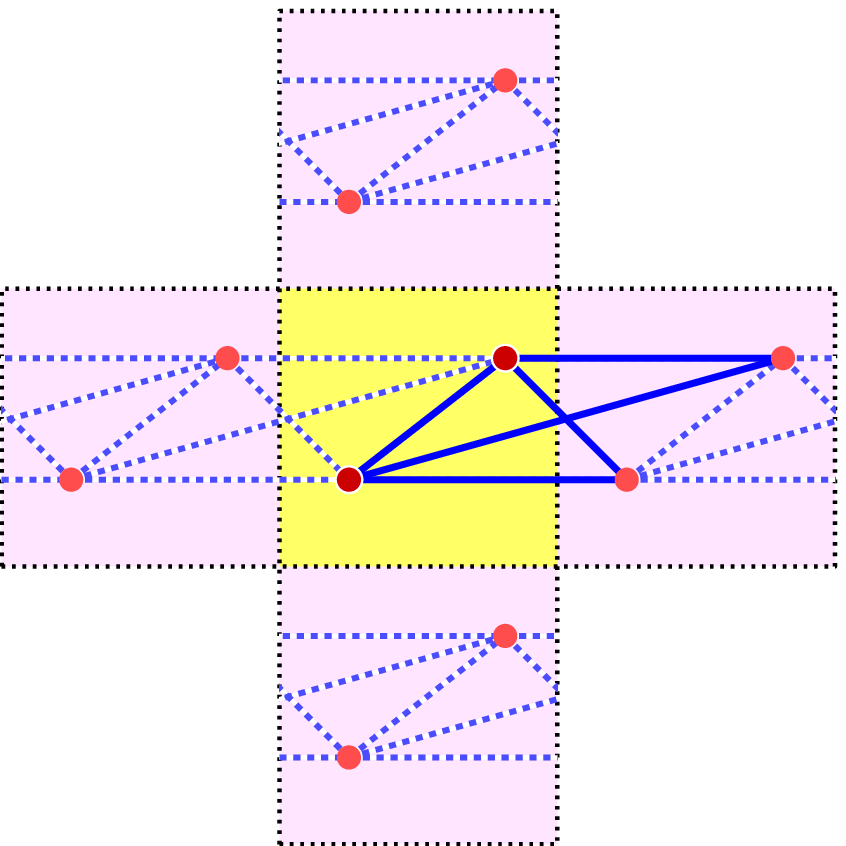}\quad%
\includegraphics[height=105pt]{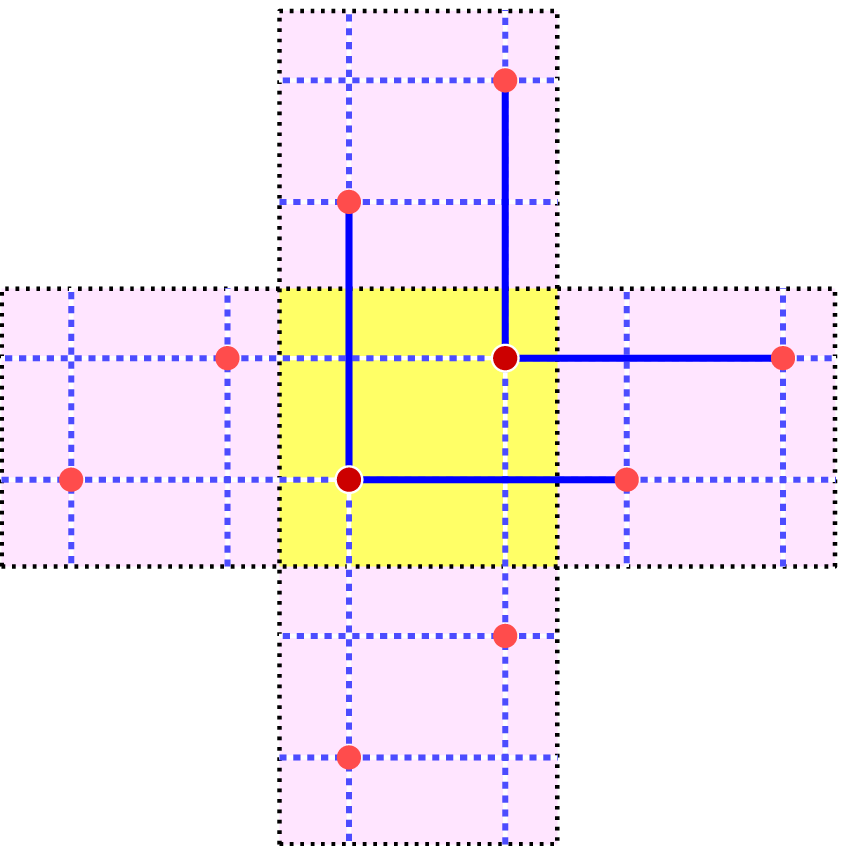}\bigskip

\includegraphics[height=105pt]{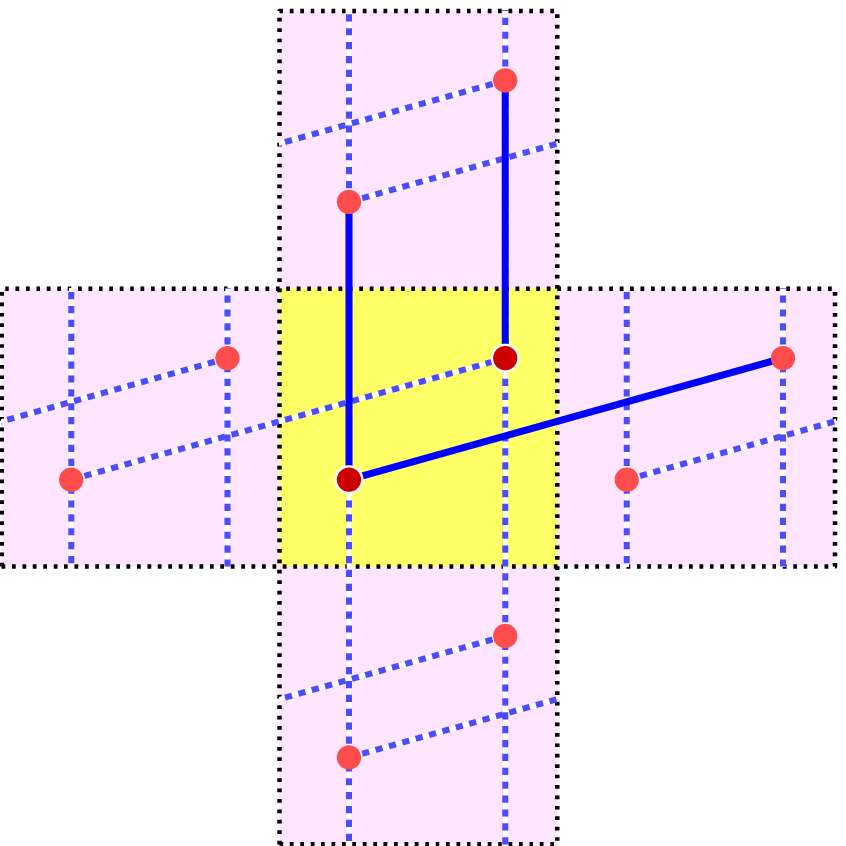}\quad%
\includegraphics[height=105pt]{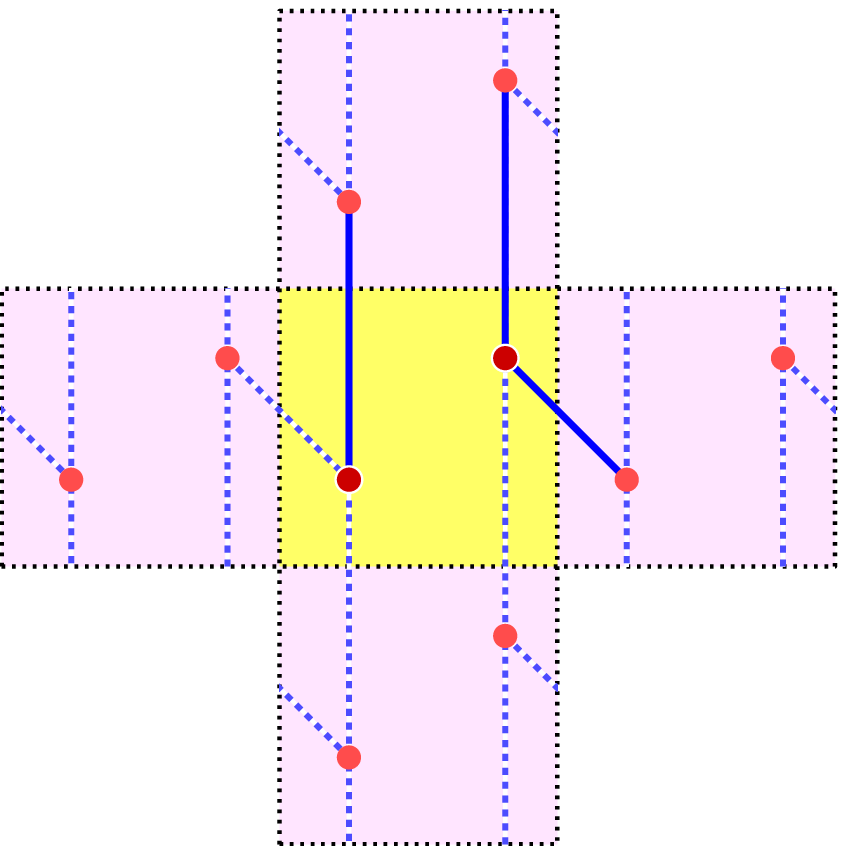}\quad%
\includegraphics[height=105pt]{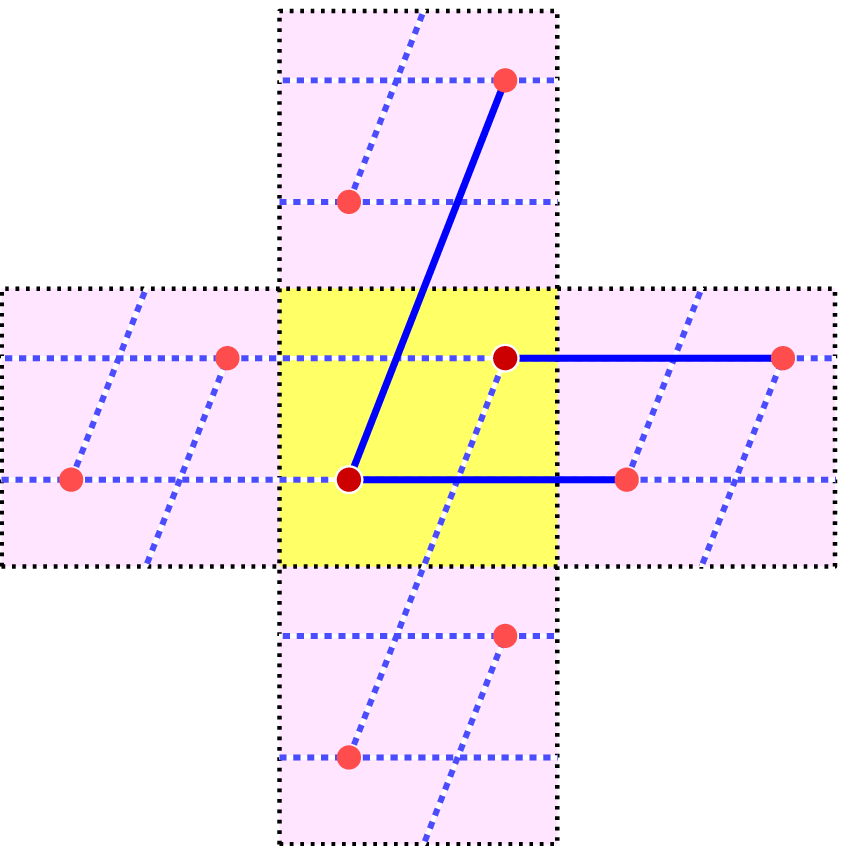}\bigskip

\includegraphics[height=105pt]{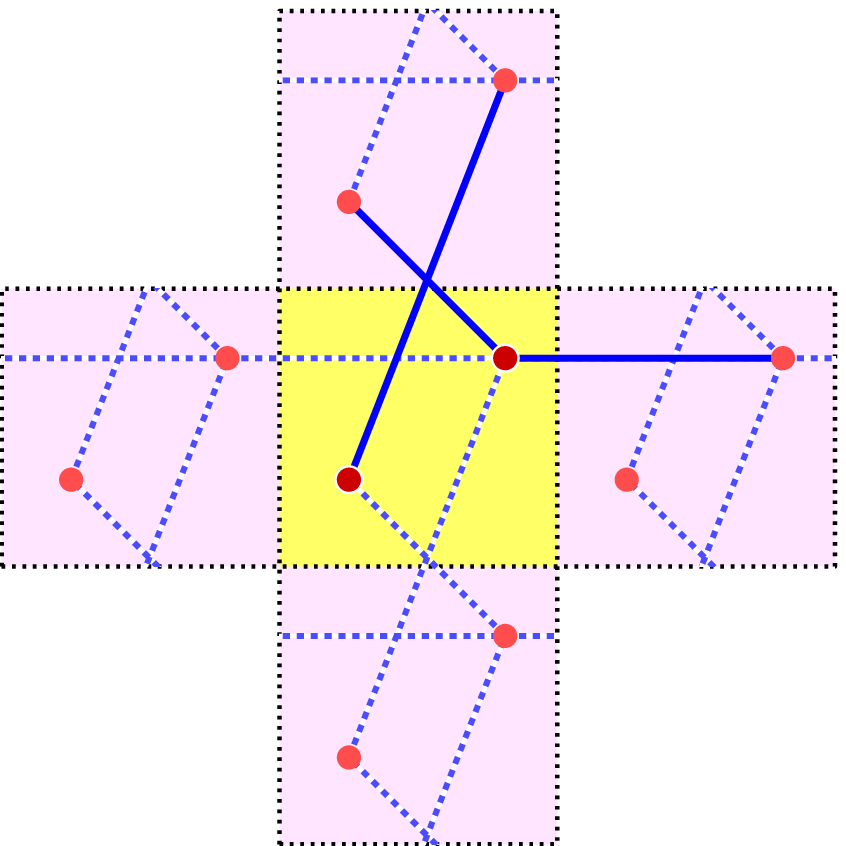}\quad%
\includegraphics[height=105pt]{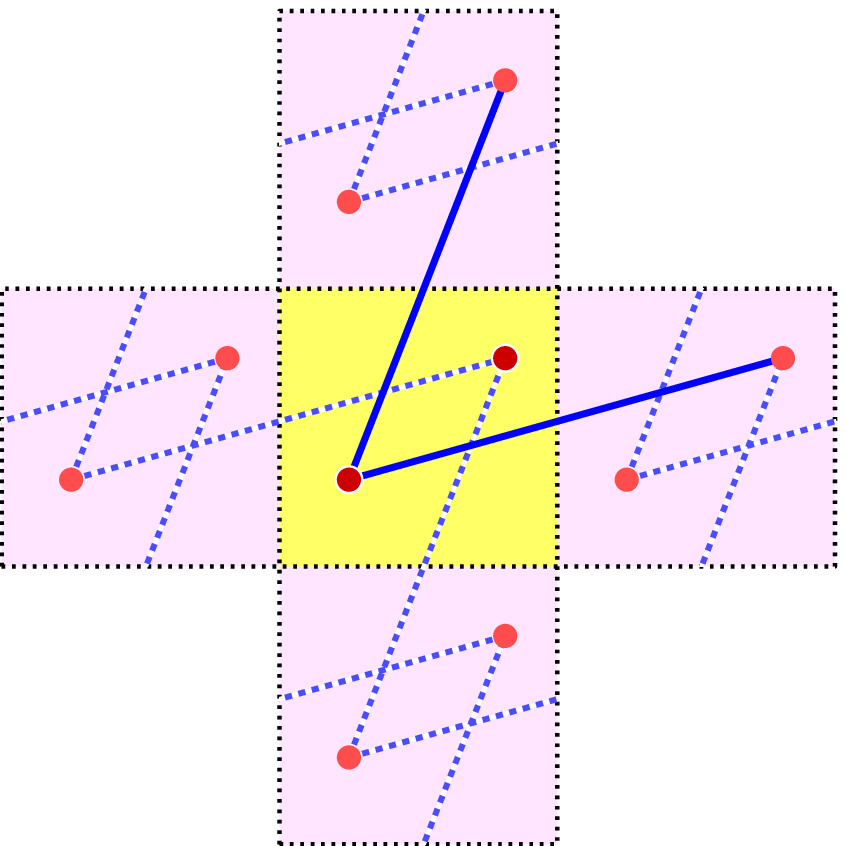}\quad%
\includegraphics[height=105pt]{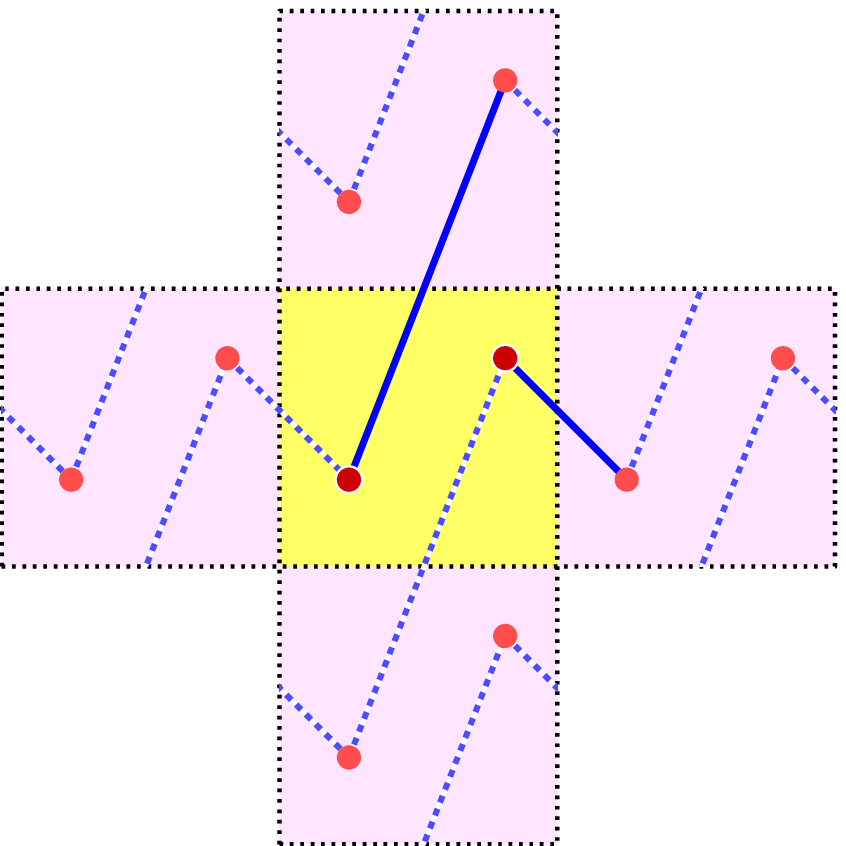}

\includegraphics[height=105pt]{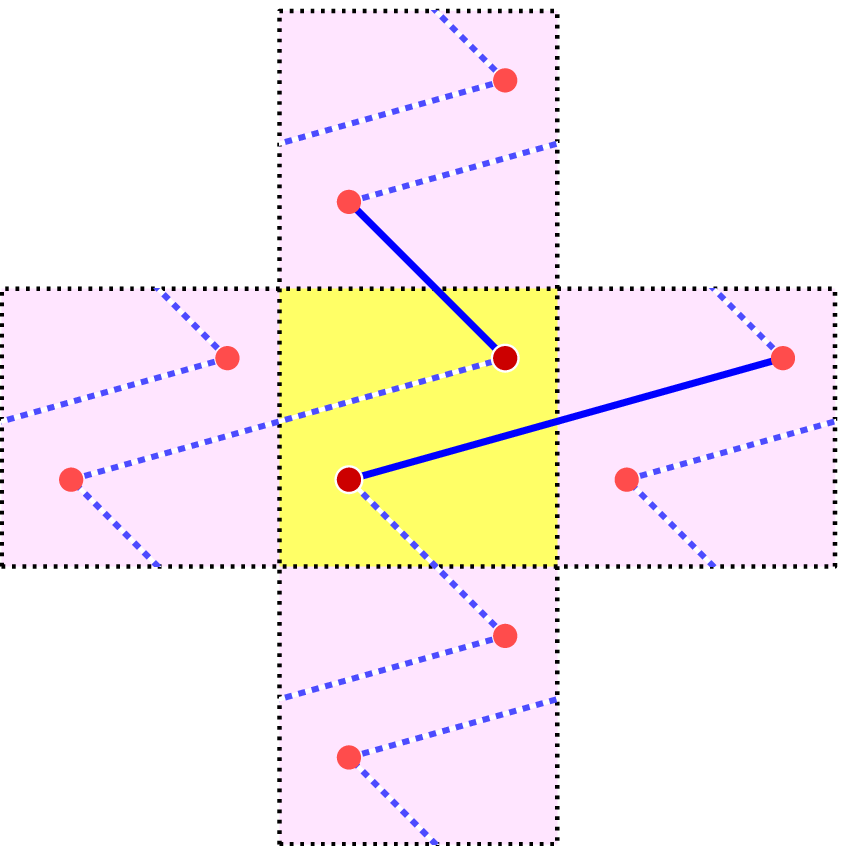}\quad\includegraphics[height=105pt]{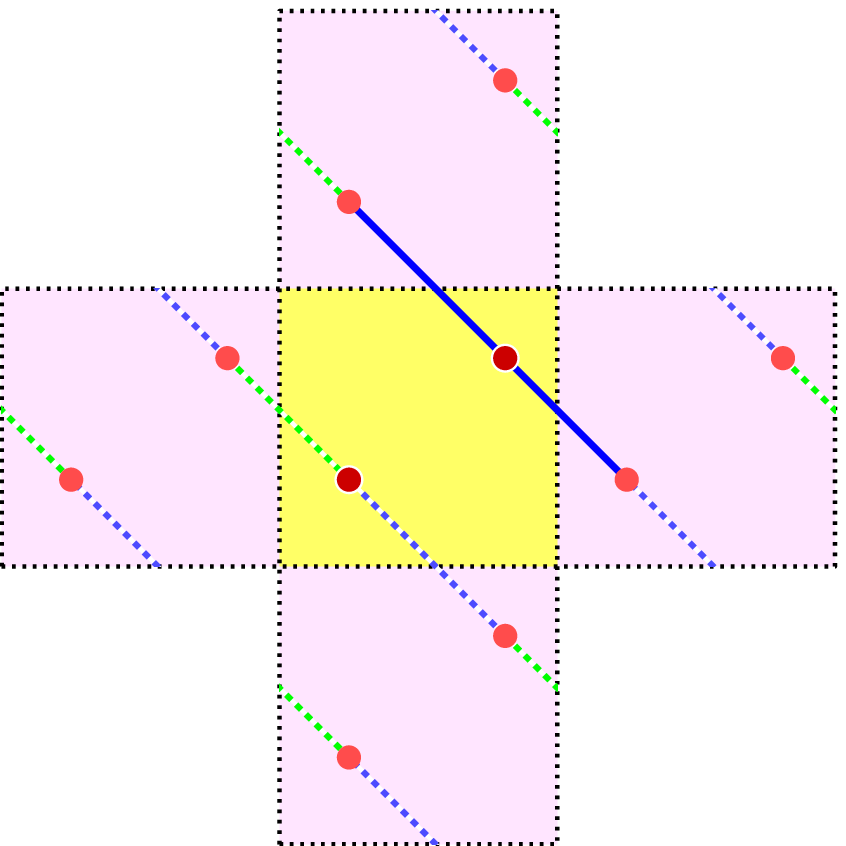}

\caption{Maximal degenerate subgraphs of the graph of Fig.~\ref{F:sample}.}
\label{F:DSG}
\end{figure}
Observe that each of these graphs are disconnected.
Moreover, they consist either of one or more $\Z$-periodic graphs together with their disjoint copies under translation by
$\Z$, (thus providing jointly a $\Z^2$-periodicity), or two disjoint isomorphic copies of a $\Z^2$-periodic graph.
It is a simple exercise that both situations lead to degeneration (even in the continuous case).
Thus all degenerations occur for ``obvious'' reasons only.


\section{Conclusions and final remarks}\label{S:remarks}
\begin{enumerate}
\item We show a dichotomy of the degeneracy question for general algebraically fibered matrix operators. It follows from a simple algebraic geometry argument.

We show that, in spite of existing counterexamples in general, the generic non-degeneracy conjecture does hold for the ``mother-graph'' (from Fig. \ref{F:sample}) of $\Z^2$-periodic two-atomic structures with nearest cell interaction.

\item All subgraphs of this graph for which genericity does not hold were found. The degeneracy appears there only for trivial reasons.

    There are 98 disconnected subgraphs of the graph of Fig.~\ref{F:sample}, which also
  form a simplicial complex.
  Of those that do not appear in Fig.~\ref{F:DSG}, we show the three which are maximal in
  Fig.~\ref{Fdisconnect}.%

\item
  The initial impression is (see Conjecture \ref{C:upscaling} and its confirmation in our example in Theorem \ref{T:32}) that one needs to have sufficiently many free
parameters in the operator to expect generic non-degeneracy, and that adding more parameters does not destroy genericity.
It would be, however, interesting to have better understanding
of what makes some discrete periodic problems degenerate.
The examples of Section~\ref{S:subgraphs} may be instructive.

\item
  It may be possible to extend our analysis in Section~\ref{S:ActualProof} to other periodic graphs.
A starting point could be to determine the Newton polytopes and mixed volumes of the polytopes corresponding to the
equations defining the set $CP$ of critical points of the dispersion relation.

We provide three approaches, which might be useful in different situation: a (non-certified) computational algebaric one, which uses Bertini software to find all irreducible components and their dimensions. The second one is testing a random point in the parameter space, which provides the correct answer almost surely. It uses Gr\"obner basis technique and can be run fast. The third one uses mixed volumes of polytopes and Bernstein-Kushnirenko type results to prove the generic non-degeneracy.
\item It is easy to create (see \cite{BerKuc}), using non-trivial graph topology, compactly supported eigenfunctions. This is known to lead to appearance of flat components in the Bloch variety, and thus degenerate extrema. However, this situation is non-generic: it is destroyed by generic small variations of the lengths (weights) of edges.
\item The reader should notice that we quickly abandon discussion of the spectral edges only and target a loftier goal - all critical points. It might be easier to understand the generic structures of (much fewer) spectral edges, but the authors have not figured out how to use this distinction.
\item
In the continuous case, the dispersion relations are not algebraic, and the operators $L(z)$ are unbounded and thus the
projection of the set $DC$ is NOT a proper map.
One thus needs to deal with projections of sets of zeros of entire functions into subspaces, such as for instance in the
classical theorem by Julia \cite{Julia}.
The situation there is complicated, and to get any reasonable results about projections of such analytic sets, one needs
more information, e.g.\ assumptions on the growth of the defining function \cite{Alexander,Erem}.
Although such growth estimates do exist (see, e.g.~\cite{KuchBAMS,KuchBook}), the authors have not succeeded in establishing
similar results for, say, Schr\"odinger operators with periodic potentials.
We, however, conjecture that an analog of the dichotomy Theorem \ref{T:dichotomy} should hold, with genericity understood
in the (unavoidably weaker) Baire category sense.
\end{enumerate}

\begin{figure}[htb]
\centering
  \includegraphics[height=105pt]{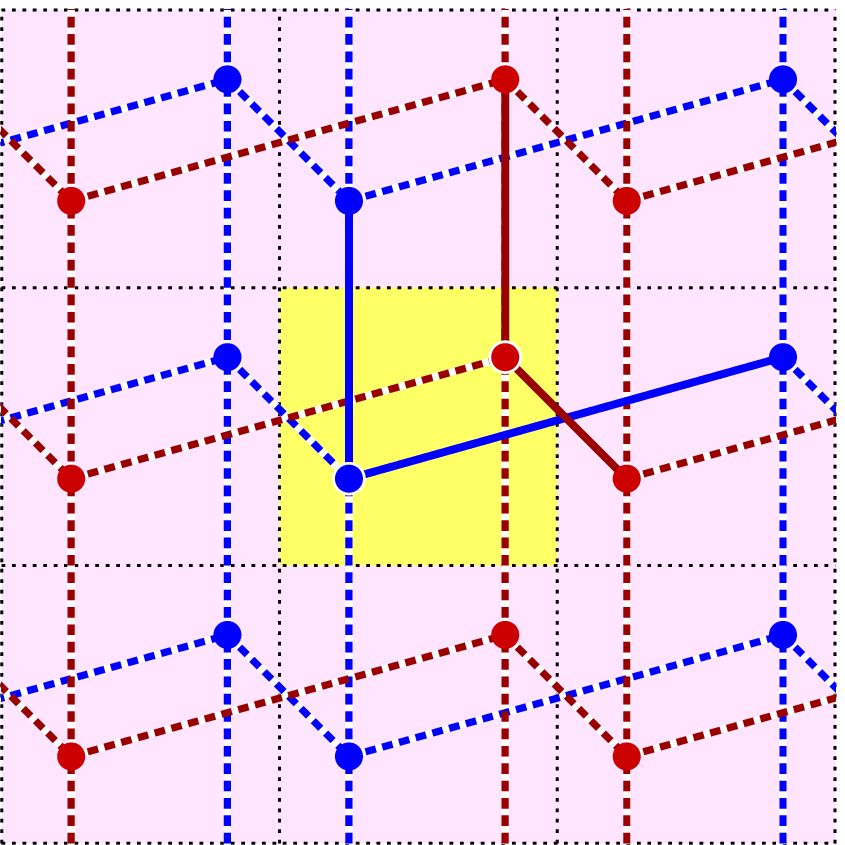}\quad%
  \includegraphics[height=105pt]{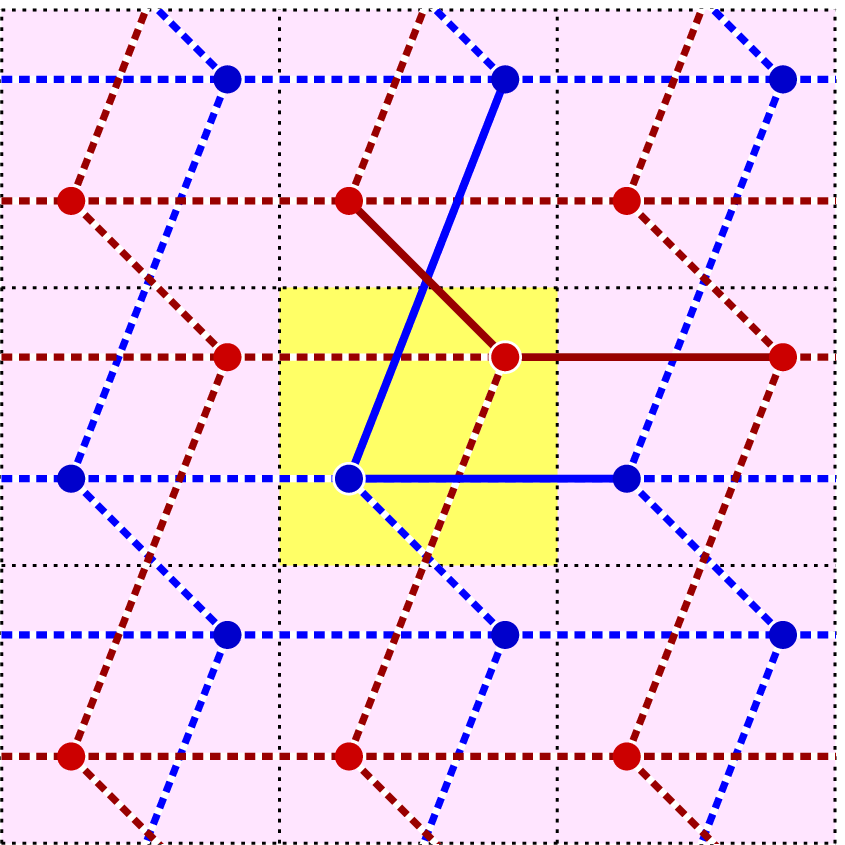}\quad%
  \includegraphics[height=105pt]{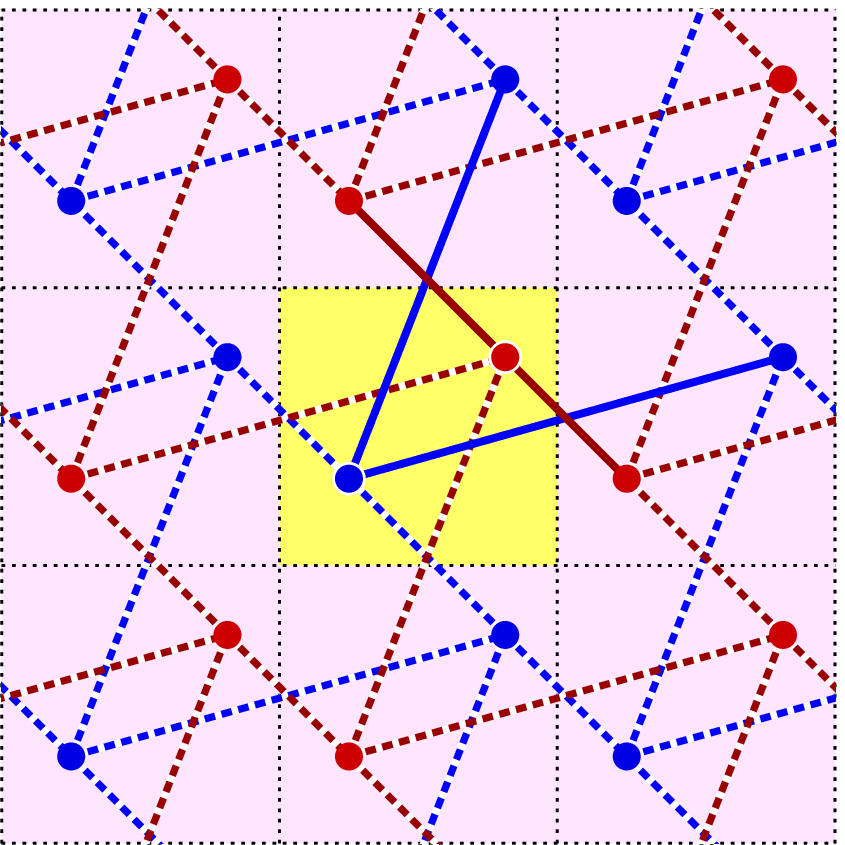}
\caption{Maximal disconnected subgraphs not in $\DSG$.}
\label{Fdisconnect}
\end{figure}

\section{Acknowledgments}\label{S:thanks}
The authors acknowledge support provided by the NSF. We are also grateful to G.~Berkolaiko, N.~Filonov, J.~Hauenstein,
I.~Kachkovskiy, Minh~Kha, L.~Parnovsky, and R.~Shterenberg for discussions and information. Thanks also go to the reviewers, whose comments have lead to significant improvement of the text.



\begin{thebibliography}{99}
\bibitem{AizMol} M. Aizenman and S. Molchanov, Localization at large disorder and at extreme energies: an elementary derivation. (English summary) Comm. Math. Phys. \textbf{157} (1993), no. 2, 245--278.

\bibitem{Alexander} H. Alexander, On a problem of Julia,  Duke Math. J.  \textbf{42}  (1975), 327--332.

\bibitem{AM} N. W. Ashcroft and N. D. Mermin, Solid state physics, Holt, Rinehart and Winston, New York-London, 1976.

\bibitem{AvelLin} M. Avellaneda and F.-H. Lin, Un th\'eor\`eme de Liouville pour des \'equations elliptiques \`a coefficients p\'eriodiques (French, with English summary), C. R. Acad. Sci. Paris S\'er. I Math. \textbf{309} (1989), no. 5, 245--250. 

\bibitem{AvrSi} J. E. Avron and B. Simon, Analytic properties of band functions, Ann. Physics \textbf{110} (1978), no. 1, 85--101. 

\bibitem{Bab} M. Babillot, Th\'eorie du renouvellement pour des cha\^ınes semi-markoviennes transientes (French, with English summary), Ann. Inst. H. Poincar´e Probab. Statist. \textbf{24} (1988), no. 4, 507--569. 

\bibitem{Bates_Bertini} D. J. Bates, J. D. Hauenstein, A. J. Sommese, and C. W. Wampler, Bertini: Software for Numerical Algebraic Geometry, 2013, url={http://bertini.nd.edu}

\bibitem{BensHom} A. Bensoussan, J.-L. Lions, and G. Papanicolaou, Asymptotic analysis for periodic structures, Studies in Mathematics and its Applications, vol. 5, North-Holland Publishing Co., Amsterdam-New York, 1978. 

\bibitem{Berk} G. Berkolaiko and Minh Kha, Degenerate band edges in maximal abelian coverings of quantum graphs, preprint 2019.

\bibitem{BerKuc} G. Berkolaiko and P. Kuchment, Introduction to quantum graphs, Mathematical Surveys and Monographs,
  vol. 186, American Mathematical Society, Providence, RI, 2013. 

\bibitem{Bir1} M. Sh. Birman, The discrete spectrum in gaps of the perturbed periodic Schr¨odinger operator. I. Regular perturbations, Boundary value problems, Schr\"odinger operators, deformation quantization, Math. Top., vol. 8, Akademie Verlag, Berlin, 1995, pp. 334--352. 
\bibitem{Bir2} M. Sh. Birman, The discrete spectrum of the periodic Schr\"odinger operator perturbed by a decreasing potential, Algebra i Analiz \textbf{8} (1996), no. 1, 3--20. 
\bibitem{Bir3} M. Sh. Birman, The discrete spectrum in gaps of the perturbed periodic Schr\"odinger operator. II. Nonregular perturbations, Algebra i Analiz \textbf{9} (1997), no. 6, 62--89. 

\bibitem{BirSus} M. Sh. Birman and T. A. Suslina, Periodic second-order differential operators. Threshold properties and averaging (Russian, with Russian summary), Algebra i Analiz \textbf{15} (2003), no. 5, 1--108, DOI 10.1090/S1061-0022-04-00827-1; English transl., St. Petersburg Math. J. \textbf{15} (2004), no. 5, 639--714. 

\bibitem{Bernstein}
D.~N. Bernstein, The number of roots of a system of equations,
Functional Anal. Appl. \textbf{9} (1975), no.~3, 183--185.

\bibitem{BiSo} F. Bihan and I. Soprunov, Criteria for strict monotonicity of the mixed volume of convex polytopes,
Adv. Geom. \textbf{19} (2019), no. 4, 527--540.


\bibitem{CoddLevinson}
E.~A. Coddington and N.~Levinson, \emph{Theory of ordinary differential
  equations}, McGraw-Hill Book Company, Inc., New York-Toronto-London, 1955.
  \%MR{0069338 (16,1022b)}

\bibitem{Colin} Y. Colin de Verdi\`ere, Sur les singularit\'es de van Hove g\'en\'eriques (French), M\'em. Soc. Math. France (N.S.) \textbf{46} (1991), 99--110. Analyse globale et physique math\'ematique (Lyon, 1989). 

\bibitem{Singular}
W. Decker, G.-M. Greuel, G. Pfister, and H. Sch\"onemann,
\emph{{\sc Singular} {4-1-1} --- {A} computer algebra system for polynomialcomputations},
{http://www.singular.uni-kl.de}, 2018.



\bibitem{Erem} A. Eremenko, Exceptional values in holomorphic families of entire functions,
 Michigan Math. J.  \textbf{54}  (2006),  no. 3, 687--696.

\bibitem{FK} N. Filonov and I. Kachkovskiy, On the structure of band edges of 2d periodic elliptic operators, Acta Math. \textbf{221} (2018), no. 1, 59--80.

\bibitem{Gis} D. Gieseker, H. Kn\"orrer, and E. Trubowitz, The geometry of algebraic Fermi curves, Perspectives in Mathematics, vol. 14, Academic Press Inc., Boston, MA, 1993. 


\bibitem{regeneration} J.~Hauenstein, A.~Sommese, and C.~Wampler, Regeneration homotopies for solving systems of
  polynomials, Math. Comp. \textbf{80} (2011), no. 273, 345--377.

\bibitem{HS12}
J.~D Hauenstein and F. Sottile.
   Algorithm 921: alphacertified: certifying solutions to polynomial
  systems. {\em ACM Transactions on Mathematical Software (TOMS)}, 38(4):28,   2012.

\bibitem{Julia} G. Julia, Sur le domaine d'existence d'une fonction implicite d\'{e}finie par une relation enti\`{e}re $G(x,y)=0$, Bull. Soc. Math. France  \textbf{54}  (1926), 26--37.

\bibitem{Kha} M. Kha, Green’s function asymptotics of periodic elliptic operators on abelian coverings of compact manifolds, J. Funct. Anal. \textbf{274} (2018), no. 2, 341--387. https://doi.org/10.1016/j.jfa.2017.10.016. 

\bibitem{KhaKuc}  M. Kha, P. Kuchment, Liouville-Riemann-Roch theorems on abelian coverings, in preparation.

\bibitem{KhaKucRai} M. Kha, P. Kuchment, and A. Raich, Green’s function asymptotics near the internal edges of spectra of
  periodic elliptic operators: spectral gap interior, J. Spectral Theory \textbf{7} (2017), No 4, pp. 1171--1233. DOI:
  10.4171/JST/188. 

\bibitem{KirschSim} W. Kirsch and B. Simon, Comparison theorems for the gap of Schr¨odinger operators, J. Funct. Anal. \textbf{75} (1987), no. 2, 396--410, DOI 10.1016/0022-1236(87)90103-0. 

\bibitem{Kittel} C. Kittel, Introduction to solid state physics, Wiley, New York, 1976.

\bibitem{KR} F. Klopp and J. Ralston, Endpoints of the spectrum of periodic operators are generically simple (English, with English and French summaries), Methods Appl. Anal. \textbf{7} (2000), no. 3, 459--463.  

\bibitem{KuchDisc} P. Kuchment, To the Floquet theory of periodic difference equations, in  \emph{ Geometrical and Algebraical Aspects in Several Complex Variables}, Cetraro (Italy), June 1989, 203-209, EditEl, 1991.

\bibitem{KuchBook} P. Kuchment, Floquet theory for partial differential equations, Operator Theory: Advances P. and Applications, vol. 60, Birkh¨auser Verlag, Basel, 1993. 

\bibitem{KuchBAMS} P. Kuchment, An overview of periodic elliptic operators, Bulletin AMS \textbf{53} (2016), Number 3,  343--414.

\bibitem{KucPin} P. Kuchment and Y. Pinchover, Integral representations and Liouville theorems for solutions of periodic elliptic equations, J. Funct. Anal. \textbf{181} (2001), no. 2, 402--446, DOI 10.1006/jfan.2000.3727. 
\bibitem{KucPin2} P. Kuchment and Y. Pinchover, Liouville theorems and spectral edge behavior on abelian coverings of
  compact manifolds, Trans. Amer. Math. Soc. \textbf{359} (2007), no. 12, 5777--5815, DOI
  10.1090/S0002-9947-07-04196-7. 

\bibitem{KucRai}P. Kuchment and A. Raich, Green’s function asymptotics near the internal edges of spectra of periodic elliptic operators. Spectral edge case, Math. Nachr. \textbf{285} (2012), no. 14--15, 1880--1894, DOI 10.1002/mana.201100272. 

\bibitem{Mos} J. Moser and M. Struwe, On a Liouville-type theorem for linear and nonlinear elliptic differential equations on a torus, Bol. Soc. Brasil. Mat. (N.S.) \textbf{23} (1992), no. 1--2, 1--20, DOI 10.1007/BF02584809. 

\bibitem{deflation} A.~Leykin, J.~Verschelde, and A.~Zhao, Newton's method with deflation for isolated singularities of
  polynomial systems, Theoret. Comput. Sci. \textbf{359} (2006), no. 1--3, 111--122.  DOI 10.1016/j.tcs.2006.02.018
\bibitem{MagnusWinkl_hills}
W.~Magnus and S. Winkler, \emph{Hill's equation}, Interscience Tracts in Pure
  and Applied Mathematics, No. 20, Interscience Publishers John Wiley {\&}
  Sons\, New York-London-Sydney, 1966. 

\bibitem{Nov} S.~P.~Novikov, Bloch functions in the magnetic field and vector bundles. Typical dispersion relations and their quantum numbers (Russian), Dokl. Akad. Nauk SSSR \textbf{257} (1981), no. 3, 538--543. 
\bibitem{Nov2}  S. P. Novikov, Two-dimensional Schr¨odinger operators in periodic fields (Russian), Current
problems in mathematics, Vol. 23, Itogi Nauki i Tekhniki, Akad. Nauk SSSR, Vsesoyuz. Inst.
Nauchn. i Tekhn. Inform., Moscow, 1983, pp. 3–32. 

\bibitem{RS} M. Reed and B. Simon, Methods of modern mathematical physics. IV, 2nd ed. Academic
Press, Inc. [Harcourt Brace Jovanovich, Publishers], New York, 1980.

\bibitem{Rojas} J.M. Rojas,
     A convex geometric approach to counting the roots of a polynomial system.
     Selected papers of the Workshop on Continuous Algorithms and Complexity (Barcelona, 1993).
      Theoret. Comput. Sci. {\bf 133} (1994), no. 1, 105--140.

\bibitem{Shaf} I.R. Shafarevich,\emph{Basic algebraic geometry 1},
  Springer-Verlag, Berlin, 1994.

\bibitem{NID}
  A.J. Sommese, J.~Verschelde, and C.W. Wampler, \emph{Numerical decomposition of the solution sets of polynomial systems
  into irreducible components}, SIAM  J.\ Numer.\ Anal. \textbf{38} (2001), no.~6, 2022--2046.

\bibitem{SW05}
  A.J. Sommese and C.W. Wampler, \emph{The numerical solution of systems of  polynomials}, World Scientific Publishing
  Co. Pte. Ltd., Hackensack, NJ,  2005.

\bibitem{Ster} R. G. Shterenberg, An example of a periodic magnetic Schr¨odinger operator with a degenerate lower edge of the spectrum (Russian, with Russian summary), Algebra i Analiz \textbf{16} (2004), no. 2, 177--185, DOI 10.1090/S1061-0022-05-00858-7; English transl., St. Petersburg Math. J. \textbf{16} (2005), no. 2, 417--422. 

\bibitem{Skr} M. M. Skriganov, Geometric and arithmetic methods in the spectral theory of multidimensional periodic operators (Russian), Trudy Mat. Inst. Steklov. \textbf{171} (1985), 122. 

\bibitem{Iakubovich_periodic}
V.~A. Yakubovich and V.~M. Starzhinskii, \emph{Linear differential equations
  with periodic coefficients. v. 1, 2}, Halsted Press [John Wiley {\&} Sons]\
  New York-Toronto, Ont., 1975, Translated from Russian by D. Louvish.
\end{thebibliography}
\end{document}